\newcommand{\naturals}{\mathbb{N}}
\newcommand{\hide}[1]{}
\newcommand{\setnocond}[1]{\{#1\}}
\newcommand{\setcond}[2]{\{\, #1 \mid #2 \,\}}
\newcommand{\wordletter}[2]{#1{[#2]}}
\newcommand{\subword}[3]{#1 {[#2..#3]}}
\newcommand{\pathto}[2]{{\xrightarrow[]{{#1}}}}
\newcommand{\fpathto}[2]{{\xRightarrow[]{#1}}}
\newcommand{\vertex}[2]{\langle {#1}, {#2} \rangle}
\newcommand{\alphabet}{\Sigma}
\newcommand{\emptyword}{\epsilon}
\newcommand{\finwords}{\alphabet^*}
\newcommand{\infwords}{\alphabet^\omega}
\newcommand{\poswords}{\alphabet^+}
\newcommand{\langsymb}[0]{\mathcal{L}}
\newcommand{\lang}[1]{\langsymb(#1)}
\newcommand{\finlang}[1]{\langsymb_{*}(#1)}
\newcommand{\inflang}[1]{\langsymb(#1)}
\newcommand{\upword}[1]{\text{UP}(#1)}
\newcommand{\states}{Q}
\newcommand{\trans}{\delta}
\newcommand{\inits}{I}
\newcommand{\acc}{F}
\newcommand{\run}{\rho}
\newcommand{\dirac}[1][\acc]{\mathbf{1}_{#1}}
\newcommand{\A}{\mathcal{A}}
\newcommand{\B}{\mathcal{B}}
\newcommand{\C}{\mathcal{C}}
\newcommand{\D}{\mathcal{D}}
\newcommand{\F}{\mathcal{F}}
\newcommand{\R}{\mathtt{T}}
\newcommand{\M}{\mathcal{M}}
\newcommand{\N}{\mathcal{N}}
\newcommand{\bigO}{\mathcal{O}}
\newcommand{\buchi}{B\"uchi\xspace}
\newcommand{\rbc}{\textsf{RBC}\xspace}
\newcommand{\size}[1]{|{#1}|}
\newcommand{\authorComment}[3]{{\color{#1}\textbf{[\!\![\!\![\marginpar{\centering{\color{#1}\textbf{#2}}}~#2: #3 ---#2.~]\!\!]\!\!]}}}
\newcommand{\at}[1]{\authorComment{teal}{AT}{#1}}
\newcommand{\td}[1]{\textcolor{blue}{\ifmmode \text{[#1]}\else [#1] \fi}}
\newcommand{\infstates}[1]{\textit{Inf}({#1})}
\newcommand{\canoEq}{\backsim}
\newcommand{\icanoEq}{\backsim^{i}}
\newcommand{\proEq}{\approx}
\newcommand{\proclass}[1]{[#1]_{{\proEq}_{u}}}
\newcommand{\oproclass}[1]{[#1]_{{\proEq}^{o}_{u}}}
\newcommand{\class}[1]{[#1]_{\canoEq}}
\newcommand{\iclass}[1]{[#1]_{\icanoEq}}
\newcommand{\oclass}[1]{[#1]_{\canoEq^{o}}}
\newcommand{\quotient}{\finwords/_{\canoEq}}
\newcommand{\iquotient}{\finwords/_{\icanoEq}}
\newcommand{\oquotient}{\finwords/_{\canoEq^{o}}}
\newcommand{\classpreq}{\trianglelefteq}
\newcommand{\classpre}{\triangleleft}
\begin{document}

\title{Congruence Relations for \buchi Automata}

\author{Yong Li\inst{1}, Yih-Kuen Tsay\inst{2}, Andrea Turrini\inst{1}, Moshe Y. Vardi\inst{3}, Lijun Zhang\inst{1}}
\institute{
State Key Laboratory of Computer Science, \\
Institute of Software, Chinese Academy of Sciences
\and
National Taiwan University
\and
Rice University
}
\maketitle
\thispagestyle{plain}

\begin{abstract}
We revisit congruence relations for \buchi automata, which play a central role in the automata-based formal verification. The size of the classical congruence relation is in $3^{\bigO(n^{2})}$, where $n$ is the number of states of a given \buchi automaton $\A$. We present improved congruence relations that can be exponentially coarser than the classical one. We further give asymptotically \emph{optimal} congruence relations of size $2^{\bigO(n \log n)}$. Based on these optimal congruence relations, we obtain an \emph{optimal} translation from \buchi automata to a family of deterministic finite automata (FDFW) that accepts the complementary language. To the best of our knowledge, our construction is the \emph{first direct} and optimal translation from \buchi automata to FDFWs.
\end{abstract}

\vspace{-0.0mm}
\section{Introduction}\label{sec:intro}
\vspace{-0.0mm}
Congruence relations for nondeterministic \buchi automata on words (NBWs)~\cite{Buc62} are fundamental for \buchi complementation, a key operation used in the formal verification framework based on automata theory~\cite{DBLP:conf/lics/VardiW86}.
To formally verify whether the behavior of a system $A$ satisfies the given specification $B$, one usually reduces this problem to a language-containment problem between the NBWs $A$ and $B$, which is reduced to the intersection of $A$ and the complement of $B$.
The first complementation construction for \buchi automata, proposed by \buchi~\cite{Buc62} and widely known as Ramsey-based \buchi complementation (\rbc), relies on a congruence relation with a $2^{2^{\bigO(n)}}$ blow-up, where $n$ is the number of states of the input automaton.
One can associate each equivalence class of the congruence relation with a state of the complementary automaton, as happens by the Myhill-Nerode theorem for regular languages~\cite{DBLP:books/el/leeuwen90/Thomas90}.
The blow-up of the congruence relation of \rbc was later reduced by Sistla \emph{et al.}~\cite{sistla1987complementation} to $3^{\bigO(n^{2})}$, without providing an explicit formal notion of congruence relation, later formalized by~\cite{DBLP:books/el/leeuwen90/Thomas90}.

Notably, current practical approaches to the containment checking for NBWs are all based on the classical congruence relation given in~\cite{sistla1987complementation,DBLP:books/el/leeuwen90/Thomas90}, even though it has a larger blow-up ($3^{\bigO(n^{2})}$ vs. $2^{\bigO(n \log n)}$) than other optimal complementation constructions, such as \emph{rank-based} complementation~\cite{KV01d}.
In fact, RABIT, also based on the classical congruence relation of~\cite{sistla1987complementation,DBLP:books/el/leeuwen90/Thomas90}, is the state-of-the-art tool for checking language-containment between NBWs, and has integrated various state-space pruning techniques for \rbc proposed in~\cite{DBLP:conf/cav/AbdullaCCHHMV10, DBLP:conf/concur/AbdullaCCHHMV11, DBLP:journals/lmcs/ClementeM19}.

Recently, \emph{families of deterministic finite automata on words} (FDFWs)~\cite{AngluinBF18} have been proposed for representing $\omega$-regular languages, as an alternative to NBWs.
If we model system and specifications as FDFWs, then the formal verification problem can be reduced to a containment problem between two FDFWs, which is doable in polynomial time~\cite{AngluinBF18}, in contrast to PSPACE-complete for NBWs~\cite{DBLP:conf/cav/KupfermanV96a}.
It has been shown that FDFWs can be induced from a congruence relation defined over a given $\omega$-regular language, where each state of the FDFW corresponds to an equivalence class of the congruence relation~\cite{AngluinF16}.

In this work we show that \rbc and FDFWs have an intimate connection and the underlying concept that connects them is the congruence relation for NBWs.
This connection gives us the possibility to further tighten the congruence relations for both \rbc and FDFWs (see Sects.~\ref{sec:improved-ramsey-for-fdfws} and~\ref{sec:optimal-rc}).
In fact, the state-space pruning techniques developed in~\cite{DBLP:conf/cav/AbdullaCCHHMV10, DBLP:conf/concur/AbdullaCCHHMV11} for \rbc~\cite{sistla1987complementation} are inherently heuristics for identifying subsumption and simulation relations between congruence relations of \rbc. Therefore, in order to further theoretically or empirically improve model-checking algorithms based on \rbc or FDFWs, it is important to understand the congruence relations for both FDFWs and \rbc and, hopefully, make their congruence relations smaller.

\emph{Contribution.}
We focus here on an in-depth study of the congruence relation for NBWs and its connection to FDFWs.
First, we show how to improve the classical congruence relation $\canoEq$ with a blow-up of $3^{\bigO(n^{2})}$ defined by the classical \rbc to congruence relations that can be exponentially tighter (Theorem~\ref{thm:comparisonCanoEq}), while
the improved congruence relations can never be larger than the classical congruence relation $\canoEq$ (Theorem~\ref{thm:compactness}).
Notably, the improved congruence relations only have a blow-up of $\bigO(n^{2})$ when dealing with deterministic \buchi automata (Theorem~\ref{thm:size-dbw-improved-fdfw}).
Second, we further propose congruence relations for NBWs with a blow-up of only $2^{\bigO( n \log n)}$ (Lemma~\ref{lem:upper-bound-opt-proEq}), which is then proved to be optimal (Theorem~\ref{thm:lower-bound-rc}).
Finally, we show that our congruence relations define an FDFW recognizing $\infwords \setminus \lang{\A}$ from an NBW $\A$.
In particular, if $\A$ has $n$ states, then our optimal congruence relations yield an FDFW $\F$ with an optimal complexity $2^{\bigO(n \log n)}$.
Thus, to the best of our knowledge, we present the \emph{first} \emph{direct} translation from an NBW to an FDFW with \emph{optimal} complexity.

We defer all missing proofs of the paper to Appendix~\ref{app:proofs}.
\vspace{-0.0mm}
\section{Preliminaries}\label{sec:preliminaries}
\vspace{-0.0mm}

Fix an \emph{alphabet} $\alphabet$.
A \emph{word} is a finite or infinite sequence of letters in $\alphabet$.
Let $\finwords$ and $\infwords$ denote the set of all finite and infinite words (or $\omega$-words), respectively.
A \emph{finitary language} is a subset of $\finwords$;
an \emph{$\omega$-language} is a subset of $\infwords$.
Let $L$ be a finitary language (resp.,~$\omega$-language); the complementary language of $L$ is $\finwords \setminus L$ (resp., $\infwords \setminus L$).
Let $\run$ be a sequence;
we denote by $\wordletter{\run}{i}$ the $i$-th element of $\run$ and by $\subword{\run}{i}{k}$ the subsequence of $\run$ starting at the $i$-th element and ending at the $k$-th element, included, when $i \leq k$, and the empty sequence $\emptyword$ when $i > k$.
Given a finite word $u$ and a word $w$, we denote by $u \cdot w$ ($uw$, for short) the concatenation of $u$ and $w$.
Given a finitary language $L_{1}$ and a finitary/$\omega$-language $L_{2}$, the concatenation $L_{1} \cdot L_{2} $ ($L_{1} L_{2}$, for short) of $L_{1}$ and $L_{2}$ is the set $L_{1} \cdot L_{2} = \setcond{uw}{u\in L_{1}, w \in L_{2}}$ and $L^{\omega}_{1}$ the infinite concatenation of $L_{1}$.

\paragraph{NBWs.}
A (nondeterministic) automaton is a tuple $\A = (\states, \inits, \trans, \acc)$, where $\states$ is a finite set of states, $\inits \subseteq \states$ is a set of initial states, $\trans \colon \states \times \alphabet \rightarrow 2^{\states}$ is a transition function, and $\acc \subseteq \states$ is a set of accepting states.
We extend $\trans$ to sets of states, by letting $\trans(S, a) = \bigcup_{q \in S} \trans(q, a)$.
We also extend $\trans$ to words, by letting $\trans(q, \emptyword) = \setnocond{q}$ and $\trans(q, a_{1} a_{2} \cdots a_{k}) = \trans(\trans(q, a_{1}),\cdots, a_{k})$, where we have $k \geq 1$ and $a_{i} \in \alphabet$ for $i \in \setnocond{1, \cdots,k}$.
An automaton on finite words is called a \emph{nondeterministic automaton on finite words} (NFW), while an automaton on $\omega$-words is called a \emph{nondeterministic \buchi automaton on infinite words} (NBW).
An NFW $\A$ is said to be a \emph{deterministic} finite-word automaton (DFW) if $\size{\inits} = 1$ and for each $q \in \states$ and $a \in \alphabet$, $\size{\trans(q, a)} \leq 1$.
DBWs are defined similarly.

A \emph{run} of an NFW/NBW $\A$ on a finite word $u$ of length $n \geq 0$ is a sequence of states $\run = q_{0} q_{1} \cdots q_{n} \in \states^{+}$, such that for every $0 < i \leq n$, $q_{i} \in \trans(q_{i-1}, \wordletter{u}{i})$.
We write $q_{0} \pathto{u}{\trans} q_{n}$ if there is a run from $q_{0}$ to $q_{n}$ over $u$ and by $q_{0} \fpathto{u}{\trans} q_{n}$ if such run also visits an accepting state.
Obviously, we have that $q \pathto{\emptyword}{\trans} q$ for all $q \in \states$ and $q \fpathto{\emptyword}{\trans} q$ for all $q \in \acc$.
A finite word $u \in \finwords$ is \emph{accepted} by an NFW $\A$ if there is a run $q_{0} \cdots q_{n}$ over $u$ such that $q_{0} \in \inits$ and $q_{n} \in \acc$.
Similarly, an \emph{$\omega$-run} of $\A$ on an $\omega$-word $w$ is an infinite sequence of states $\run = q_{0} q_{1}\cdots$ such that $q_{0} \in \inits$ and for every $i > 0$, $q_{i} \in \trans(q_{i-1}, w[i])$.
Let $\infstates{\run}$ be the set of states that occur infinitely often in the run $\run$.
An $\omega$-word $w \in \infwords$ is \emph{accepted} by an NBW $\A$ if there exists an $\omega$-run $\run$ of $\A$ over $w$ such that $\infstates{\run} \cap \acc \neq \emptyset$.
The \emph{finitary language} recognized by an NFW $\A$, denoted by $\finlang{\A}$, is defined as the set of finite words accepted by it. Similarly, we denote by $\inflang{\A}$ the \emph{$\omega$-language} recognized by an NBW $\A$, i.e., the set of $\omega$-words accepted by $\A$.
The complementary automaton of an NBW $\A$, denoted as $\A^{c}$, accepts the complementary language of $\inflang{\A}$, i.e., $\inflang{\A^{c}} = \infwords \setminus \inflang{\A}$.

\paragraph{Congruence Relations.}
A \emph{right congruence} (RC) relation is an equivalence relation $\canoEq$ over $\finwords$ such that $x \canoEq y$ implies $xv \canoEq yv$ for all $v \in \finwords$.
A \emph{congruence relation} is an equivalence relation $\canoEq$ over $\finwords$ such that $x \canoEq y$ implies $uxv \canoEq uyv$ for every $x, y, u, v \in \finwords$.
We denote by $\size{\canoEq}$ the index of $\canoEq$, i.e., the number of equivalence classes of $\canoEq$.
A \emph{finite congruence relation} is a congruence relation with a finite index.
We use $\quotient$ to denote the set of equivalence classes of $\canoEq$.
Given $x \in \finwords$, we denote by $\class{x}$ the equivalence class of $\canoEq$ that $x$ belongs to.

For a given right congruence $\canoEq$ of a regular language $L$, it is well-known that Myhill-Nerode theorem~\cite{Myhill57,Nerode58} defines a unique minimal DFW $D$ of $L$, in which each state of $D$ corresponds to an equivalence class defined by $\canoEq$ over $\finwords$.
Therefore, we can construct a DFW $\D[\canoEq]$ from $\canoEq$ in a standard way.

\begin{definition}[\hspace*{-1.5mm}\cite{Myhill57,Nerode58}]
\label{def:induced-dfw}
Let $\canoEq$ be a right congruence of finite index.
The DFW $\D[\canoEq]$ without accepting states induced by $\canoEq$ is a tuple $(S, s_{0}, \trans_{\D}, \emptyset)$ where
$S = \quotient$, $s_{0} = \class{\emptyword}$, and for each $u \in \finwords$ and $a \in \alphabet$, $\trans_{\D}(\class{u}, a) = \class{ua}$.
\end{definition}

The DFW $\D[\canoEq]$, parametric on $\canoEq$, indicates that it is induced by the right congruence relation $\canoEq$.
We may just write $\D$ if $\canoEq$ is clear from the context.

\paragraph{UP-words.}
The $\omega$-regular languages accepted by NBWs can also be recognized by FDFWs  by means of their \emph{ultimately periodic words} (UP-words)~\cite{AngluinBF18}.
A UP-word $w$ is an $\omega$-word of the form $uv^{\omega}$, where $u \in \finwords$ and $v \in \poswords$.
Thus $w = uv^{\omega}$ can be represented as a pair of finite words $(u, v)$, called a \emph{decomposition} of $w$.
A UP-word can have multiple decompositions: for instance $(u, v)$, $(uv, v)$, and $(u, vv)$ are all decompositions of $uv^{\omega}$.
For an $\omega$-language $L$, let $\upword{L} = \setcond{uv^{\omega} \in L }{u \in \finwords, v \in \poswords }$ denote the set of all UP-words in $L$.
The set of UP-words of an $\omega$-regular language $L$ can be seen as the
fingerprint of $L$, as stated below.
\begin{theorem}[\hspace*{-1.4mm}\cite{Buc62}]
\label{thm:upword-omega-regular-language}
(1)
Every non-empty $\omega$-regular language $L$ contains at least one UP-word.
(2)		
Let $L$, $L'$ be two $\omega$-regular languages.
Then $L = L'$ if and only if $\upword{L} = \upword{L'}$.
\end{theorem}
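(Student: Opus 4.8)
The plan is to prove both statements by elementary automata-theoretic arguments, relying only on the definition of NBW acceptance together with the classical closure of $\omega$-regular languages under Boolean operations. For part~(1), I would fix an NBW $\A = (\states, \inits, \trans, \acc)$ recognizing $L$. Since $L \neq \emptyset$, there is some $w \in L$ with an accepting $\omega$-run $\run = q_0 q_1 \cdots$, so $q_0 \in \inits$ and $\infstates{\run} \cap \acc \neq \emptyset$. Pick some $q \in \infstates{\run} \cap \acc$; since $q$ occurs infinitely often in $\run$, choose positions $0 \le i < j$ with $q_i = q_j = q$, and set $u = \subword{w}{1}{i}$ and $v = \subword{w}{i+1}{j}$, so that $v \in \poswords$ because $i < j$. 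Then $q_0 \pathto{u}{\trans} q$ with $q_0 \in \inits$ and $q \pathto{v}{\trans} q$ with $q \in \acc$ on the cycle, so the run that follows $u$ to $q$ and then loops on $v$ forever is an accepting $\omega$-run of $\A$ on the UP-word $uv^{\omega}$; hence $uv^{\omega} \in \upword{L}$ and $\upword{L} \neq \emptyset$.

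For part~(2), the direction $L = L' \Rightarrow \upword{L} = \upword{L'}$ is immediate from the definition of $\upword{\cdot}$. For the converse I would argue by contraposition: if $L \neq L'$, then without loss of generality there is $w \in L \setminus L'$, and the language $L'' := L \cap (\infwords \setminus L')$ is $\omega$-regular — here one invokes closure of $\omega$-regular languages under complement and intersection — and non-empty, since $w \in L''$. Applying part~(1) to $L''$ yields a UP-word $uv^{\omega} \in L''$; then $uv^{\omega} \in L$ gives $uv^{\omega} \in \upword{L}$, whereas $uv^{\omega} \notin L'$ gives $uv^{\omega} \notin \upword{L'}$, so $\upword{L} \neq \upword{L'}$.

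The only genuinely nontrivial ingredient is the appeal in part~(2) to closure of $\omega$-regular languages under complementation — Büchi's complementation theorem, which is precisely the subject of this paper — but it is used here merely as a well-known fact, with no circularity, since nothing later in the paper depends on this theorem's proof. The remaining content is a routine lasso-extraction argument on runs; the only points needing a little care are guaranteeing $v \neq \emptyword$ (ensured by $i < j$) and the boundary case $i = 0$, where $u = \emptyword$ and $q = q_0 \in \inits$, neither of which causes any difficulty.
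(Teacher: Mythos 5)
Your proof is correct. The paper itself gives no proof of this theorem---it is quoted as a classical result of B\"uchi \cite{Buc62}---and your argument is exactly the standard one: a lasso-extraction (pumping) argument on an accepting run for part~(1), handling $v \neq \emptyword$ and the $i=0$ boundary correctly, and for part~(2) closure of $\omega$-regular languages under complement and intersection applied to $L \cap (\infwords \setminus L')$ together with part~(1). Your remark on non-circularity is also fair: B\"uchi complementation is invoked only as a known black-box closure property, and nothing in the paper's later development feeds back into this theorem's proof.
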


\paragraph{FDFWs.}
Based on Theorem~\ref{thm:upword-omega-regular-language}, Angluin \emph{et al.} introduced in~\cite{AngluinBF18} the notion of FDFWs as another type of automata to recognize $\omega$-regular languages.
\begin{definition}[FDFWs~\cite{AngluinBF18}]
\label{def:fdfws}
An FDFW is a pair $\F = (\M, \setnocond{\N_{q}})$ consisting of a leading DFW $\M$ and of a progress DFW $\N_{q}$ for each state $q$ in $\M$.
\end{definition}
Intuitively, the leading DFW $\M$ of $\F = (\M, \setnocond{\N_{q}})$ for an $\omega$-regular language $L$ consumes the finite prefix $u$ of a UP-word $uv^{\omega} \in \upword{L}$, reaching some state $q$, and for each state $q$ of $\M$, the progress DFW $\N_{q}$ accepts the period $v$ of $uv^{\omega}$.
An example of FDFW $\F$ is depicted in Fig.~\ref{fig:fdfw-example} where the leading DFW $\M$ has only one state $s$ and the progress DFW associated with $s$ is $\N_{s}$.
Note that the leading DFW $\M$ of every FDFW does not make use of accepting states.
\begin{figure}[t]
\begin{center}
\vspace*{-2mm}
\includegraphics[scale=0.9]{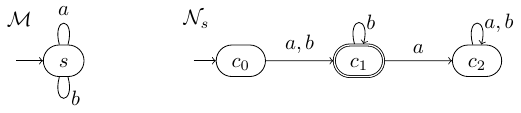}
\end{center}
\vspace*{-3mm}
\caption{An example of FDFW $\F = (\M, \setnocond{\N_{s}})$ which is not saturated.}
\label{fig:fdfw-example}
\end{figure}

Let $\D$ be a DFW with initial state $q_{0}$ and transition function $\trans$.
Given a word $u \in \finwords$, we often use $\D(u)$ as a shorthand for $\trans(q_{0}, u)$.
Each FDFW $\F$ characterizes a set of UP-words $\upword{\F}$ by following the acceptance condition.
\begin{definition}[FDFW Acceptance]
\label{def:acc-fdfa}
Let $\F = (\M, \setnocond{\N_{q}})$ be an FDFW and $w$ be a UP-word.
A decomposition $(u, v)$ of $w$ is \emph{normalized} with respect to $\F$ if $\M(u) = \M(uv)$.\footnote{We use the normalized decomposition of UP-words defined in~\cite{li2017novel}, which is different from the one given in~\cite{AngluinBF18}. Ours is a definition for a UP-word, while their definition is applied to a decomposition. However, this difference does not affect the definition of a saturated FDFW to be given later.}
A decomposition $(u, v)$ is accepted by $\F$ if $(u, v)$ is normalized and we have $v \in \finlang{\N_{q}}$  where $q = \M(u)$.
$w$ is accepted by $\F$ if there exists a decomposition $(u, v)$ of $w$ accepted by $\F$.
\end{definition}
Note that the normalized decomposition $(u, v)$ is defined with respect to $\F$. We usually omit $\F$ and just say $(u, v)$ is normalized when $\F$ is clear from the context.
Consider again the FDFW $\F$ in Fig.~\ref{fig:fdfw-example}:
$(aba)^{\omega}$ is not accepted since no decomposition of $(aba)^{\omega}$ is accepted
while $(ab)^{\omega}$ is accepted since the decomposition $(ab, ab)$ of $(ab)^{\omega}$ is such that $\M(ab \cdot ab) = \M(ab) = s$ and $ab \in \finlang{\N_{s}}$.

One can observe that the normalized decomposition $(ab, abab)$ of $(ab)^{\omega}$ is not accepted by $\F$, despite that $(ab, ab)$ is accepted by $\F$. In the following, we define a class of FDFWs that \emph{saturates} each accepting normalized decomposition $(ab, (ab)^{k})$ of $(ab)^{\omega}$ (where $k \geq 1$) if $(ab, ab)$ is accepted, which is important for FDFWs to recognize $\omega$-regular languages~\cite{li2017novel,AngluinBF18}.

\begin{definition}[Saturation of FDFWs~\cite{AngluinBF18}]
\label{def:saturated-fdfws}
Let $\F = (\M, \setnocond{\N_{q}})$ be an FDFW and $w$ be a UP-word in $\upword{\F}$.
We say $\F$ is
\emph{saturated} if for all normalized decompositions $(u, v)$ and $(u', v')$ of $w$, either both $(u, v)$ and $(u', v')$ are accepted by $\F$ or both are not.
\end{definition}

Intuitively, for a saturated FDFW $\F$, a UP-word $w$ is accepted by $\F$ if and only if all normalized decompositions $(u, v)$ of $w$ are accepted by $\F$.
From a saturated FDFW $\F$, one can construct an equivalent NBW $\A$ that recognizes $\upword{\F}$ in polynomial time.
\begin{lemma}[Polynomial Translation from FDFWs to NBWs~\cite{li2017novel,AngluinBF18}]
\label{lem:fdfw-to-nbw}
Let $\F = (\M, \setnocond{\N_{q}})$ be a saturated FDFW with $n$ states.
Then, one can construct an NBW $\A$ with $\bigO(n^3)$ states such that $\upword{\F} = \upword{\lang{\A}}$.
\end{lemma}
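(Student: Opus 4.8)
The plan is to realize the standard two-phase NBW construction underlying the FDFW-to-NBW translations of~\cite{li2017novel,AngluinBF18} and then to verify $\upword{\F}=\upword{\lang{\A}}$ in the two directions, the backward one being the real work. Concretely, $\A$ starts by simulating the leading DFW $\M$ (this phase reuses the states of $\M$); at a nondeterministically chosen moment it \emph{commits}, recording the current $\M$-state $q$ as an \emph{anchor} and guessing an accepting state $\hat r$ of $\N_{q}$ that is meant to recur at every boundary of the periodic part. From then on a run of $\A$ maintains a pair consisting of (i) the state reached in $\M$ when started from $q$ on the letters read since committing, used to detect returns to the anchor, and (ii) the state reached in $\N_{q}$ when started from its initial state on the same letters, used to detect completions of a period. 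A run is accepting iff it commits and afterwards visits, infinitely often, a configuration in which the $\M$-component equals $q$ \emph{and} the $\N_{q}$-component equals $\hat r$; these are exactly the B\"uchi-accepting states of $\A$. The states of the second phase form a subset of the quadruples $(q,\hat r,p,r)$ with $q,p$ states of $\M$ and $\hat r,r$ states of $\N_{q}$, so their number is at most $\sum_{q}\size{\M}\cdot\size{\N_{q}}^{2}\le n\cdot\bigl(\sum_{q}\size{\N_{q}}\bigr)^{2}\le n^{3}$, and hence $\A$ has $\bigO(n^{3})$ states.

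For $\upword{\F}\subseteq\upword{\lang{\A}}$, take $w\in\upword{\F}$ and fix an accepted normalized decomposition $(u,v)$ with $q=\M(u)=\M(uv)$ and $v\in\finlang{\N_{q}}$, so $w=uv^{\omega}$. The key point is that saturation forces $v^{k}\in\finlang{\N_{q}}$ for every $k\ge 1$: since $(u,v)$ is accepted and $\F$ is saturated, every normalized decomposition of $uv^{\omega}$ is accepted, in particular $(u,v^{k})$, which yields $v^{k}\in\finlang{\N_{\M(u)}}=\finlang{\N_{q}}$. Consequently, when $\N_{q}$ reads $v^{\omega}$ from its initial state it is in an accepting state at every multiple of $\size{v}$, and, being a DFW on a periodic input, it revisits some fixed accepting state $\hat r$ at infinitely many of these positions; moreover $\M$ started at $q$ loops at $q$ along $v^{\omega}$. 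Hence the run of $\A$ that reads $u$ in phase~$1$, commits at $q$ with this guess $\hat r$, and then reads $v^{\omega}$ meets the acceptance condition at infinitely many multiples of $\size{v}$, witnessing $w\in\lang{\A}$.

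The harder inclusion is $\upword{\lang{\A}}\subseteq\upword{\F}$. Here I would start from an accepting run of $\A$ on a UP-word $w$; say it commits after the prefix $u'$ with anchor $q=\M(u')$ and guess $\hat r$. Then there are infinitely many positions of $w$ at which $\M$ is at $q$ and $\N_{q}$, run from its initial state on the part of $w$ read after $u'$, is at $\hat r$. Since $w$ is ultimately periodic it has only finitely many distinct suffixes, so two such positions $c_{1}<c_{2}$ induce the same suffix of $w$; writing $\hat u=w[1..c_{1}]$, $\hat v=w[c_{1}+1..c_{2}]$ and $p_{0}=w[\size{u'}+1..c_{1}]$ we get $w=\hat u\,\hat v^{\omega}$, $\M(\hat u)=q$, both $p_{0}$ and $\hat v$ are loops of $\M$ at $q$, $\N_{q}(p_{0})=\hat r$, and reading $\hat v$ in $\N_{q}$ from $\hat r$ returns to $\hat r$. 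What remains is to turn $(\hat u,\hat v)$ into an \emph{accepted} normalized decomposition of $w$, and this is the step I expect to be the main obstacle: we only know $\hat v$ is a loop at the \emph{reachable} accepting state $\hat r$, whereas Definition~\ref{def:acc-fdfa} asks for a period that $\N_{q}$ accepts \emph{from its initial state}. The plan to bridge this gap is to feed suitable auxiliary UP-words into the saturation hypothesis---for instance $\hat u\,(p_{0}\hat v)^{\omega}$, which has the manifestly accepted normalized decomposition $(\hat u,p_{0}\hat v)$ (captured because $\N_{q}(p_{0}\hat v)=\hat r$ is accepting), so that \emph{all} of its normalized decompositions are accepted---and to combine this with the combinatorics of ultimately periodic words so as to extract from it an accepted normalized decomposition of $w$ itself, whose periodic part is an appropriate rotation or power of $\hat v$ that still loops $\M$ at $q$; then $w\in\upword{\F}$. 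Together the two inclusions give $\upword{\F}=\upword{\lang{\A}}$.
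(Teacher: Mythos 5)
Your construction, as described, is unsound, and this is why the ``main obstacle'' you flag in the backward inclusion cannot be overcome: in the second phase you never \emph{reset} the $\N_{q}$-component, so the B\"uchi condition only certifies that the tail after committing factors as $p_{0}v_{1}v_{2}\cdots$ where $p_{0}$ drives $\N_{q}$ from its initial state to $\hat r$ and each $v_{i}$ merely loops $\N_{q}$ at $\hat r$ (and loops $\M$ at $q$); it does not certify that any period is accepted by $\N_{q}$ \emph{from its initial state}, which is what Definition~\ref{def:acc-fdfa} requires. Concretely, let $\M$ have a single state $q$ (so every decomposition is normalized) and let $\N_{q}$ be the two-state DFW over $\{a,b\}$ for ``the word contains an $a$'', whose accepting state is a sink. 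This FDFW is saturated, and $\upword{\F}$ is exactly the set of UP-words with infinitely many $a$'s. Your $\A$ nevertheless accepts $ab^{\omega}$: commit immediately, guess $\hat r=$ the accepting sink; after reading $a$ the $\N_{q}$-component is $\hat r$ and remains there forever while the $\M$-component is trivially $q$. But $ab^{\omega}\notin\upword{\F}$, since every decomposition of $ab^{\omega}$ has period $b^{j}\notin\finlang{\N_{q}}$. Hence $\upword{\lang{\A}}\not\subseteq\upword{\F}$, so no saturation argument can close the gap; in particular your proposed bridge via the auxiliary word $\hat u(p_{0}\hat v)^{\omega}$ is hopeless, because saturation only relates normalized decompositions of one and the same $\omega$-word, and in this example the auxiliary word $a(ab)^{\omega}$ does lie in $\upword{\F}$ while $w=ab^{\omega}$ does not.

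The cited constructions differ from yours precisely at this point. After committing at $q$ one must check membership of the tail in the $\omega$-power of a period language evaluated from the initial state of $\N_{q}$ for \emph{each} period, e.g., for every pair $(q,f)$ with $f$ accepting in $\N_{q}$ take $P_{q,f}=\{v\in\poswords : \M \text{ returns from } q \text{ to } q \text{ on } v,\ \N_{q}(v)=f,\ \N_{q} \text{ returns from } f \text{ to } f \text{ on } v\}$ and build an NBW for $\bigcup_{q,f}\{u : \M(u)=q\}\cdot\bigl(P_{q,f}\bigr)^{\omega}$, which stays within $\bigO(n^{3})$ states. The third condition makes $P_{q,f}$ closed under concatenation, so in the backward direction the pigeonhole argument on suffixes of a UP-word (which you already have) immediately produces a normalized decomposition whose period lies in $P_{q,f}\subseteq\finlang{\N_{q}}$ --- no saturation needed there. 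Saturation is used exactly where you used it, in the forward direction: from an accepted $(u,v)$ it yields $v^{k}\in\finlang{\N_{q}}$ for all $k$, and a pigeonhole on the eventually periodic sequence $k\mapsto\N_{q}(v^{k})$ gives an accepting $f$ and an exponent $m$ with $v^{m}\in P_{q,f}$. Your forward direction and your state count are fine; it is the period-checking mechanism of the automaton that must be repaired before the backward inclusion can be proved.
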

Note that an FDFW that is \emph{not} saturated does not necessarily recognize an $\omega$-regular language (cf.~\cite{li2017novel}), let alone permit an equivalent translation to NBWs.

In the remainder of the paper, we fix an NBW $\A = (\states, \inits, \trans, \acc)$, unless explicitly stated otherwise, where $\A$ has $n$ states, i.e., $n = \size{\states }$.
We call a state in an FDFW a \emph{macrostate} to distinguish it from states of $\A$.

\section{Improved Congruence Relations for NBWs}
\label{sec:ramsey-for-fdfws}

In this section we present congruence relations that can be used to generate NBWs accepting the language of a given NBW $\A$ or its complement.
We first review in Sect.~\ref{ssec:classical-cr} the classical congruence relations defined in~\cite{sistla1987complementation,DBLP:books/el/leeuwen90/Thomas90} and then give improved congruence relations in Sect.~\ref{sec:improved-ramsey-for-fdfws}.

\subsection{Classical Congruence Relations}
\label{ssec:classical-cr}

As mentioned in the introduction, the index of the congruence relation of \rbc proposed by \buchi~\cite{Buc62} is doubly exponential in the size of $\A$.
Sistla, Vardi, and Wolper~\cite{sistla1987complementation} showed how to improve \rbc with a subset construction that was later presented by Thomas~\cite{DBLP:books/el/leeuwen90/Thomas90} as the following congruence relation $\canoEq_{\A}$.

\begin{definition}[\hspace*{-1.5mm}\cite{sistla1987complementation,DBLP:books/el/leeuwen90/Thomas90}]
\label{def:congruence-rel}
In the \rbc construction,
for all $u_{1}, u_{2} \in \finwords$, $u_{1} \canoEq_{\A} u_{2}$ if and only if for all $q, r \in \states$,
(1) $q \pathto{u_{1}}{\trans} r$ iff $q \pathto{u_{2}}{\trans} r$;
and (2) $q \fpathto{u_{1}}{\trans} r$ iff $q \fpathto{u_{2}}{\trans} r$.
\end{definition}

It is easy to verify that $\canoEq_{\A}$ is a (right-)congruence relation:
given two finite words $u_{1}$ and $u_{2}$ such that $u_{1} \canoEq_{\A} u_{2}$, we have that $xu_{1}y \canoEq_{\A} xu_{2}y$ holds for all $x,y \in \finwords$.
Moreover, we have that $\canoEq_{\A}$ is of finite index, as stated by the next lemma.
To simplify the notation, we just write $\canoEq$ instead of $\canoEq_{\A}$ as $\A$ is fixed.

\begin{restatable}[\hspace*{-1.5mm}\cite{sistla1987complementation,DBLP:books/el/leeuwen90/Thomas90}]{lemma}{sizeOfcanoEq}
\label{lem:size-of-canoEq-old}
Let $\canoEq$ be as given in Definition~\ref{def:congruence-rel}. Then $\size{\canoEq} \leq 3^{n^{2}}$.
\end{restatable}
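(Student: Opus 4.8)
The plan is to show that every $\canoEq$-class is uniquely determined by a single function $\states \times \states \to \{0,1,2\}$, and then count such functions.

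First I would, for each $u \in \finwords$, define a map $f_u \colon \states \times \states \to \{0,1,2\}$ by setting $f_u(q,r) = 2$ if $q \fpathto{u}{\trans} r$, setting $f_u(q,r) = 1$ if $q \pathto{u}{\trans} r$ but not $q \fpathto{u}{\trans} r$, and setting $f_u(q,r) = 0$ otherwise. The only observation needed for well-definedness is that a run of $\A$ over $u$ that visits an accepting state is in particular a run of $\A$ over $u$; hence $q \fpathto{u}{\trans} r$ implies $q \pathto{u}{\trans} r$, so for each pair $(q,r)$ the three cases above are mutually exclusive and jointly exhaustive, making $f_u$ a total function.

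Next I would observe that, by Definition~\ref{def:congruence-rel}, $u_1 \canoEq u_2$ holds precisely when $f_{u_1} = f_{u_2}$: conditions (1) and (2) together state exactly that, for every pair $(q,r)$, the predicates ``$q \pathto{u}{\trans} r$'' and ``$q \fpathto{u}{\trans} r$'' have the same truth values under $u_1$ and under $u_2$, which by the previous paragraph is the same as $f_{u_1}(q,r) = f_{u_2}(q,r)$. Hence the assignment $u \mapsto f_u$ factors through an injection from $\quotient$ into the set of functions of type $\states \times \states \to \{0,1,2\}$.

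Finally I would count: the domain $\states \times \states$ has $n^2$ elements and the codomain has $3$ elements, so there are exactly $3^{n^2}$ such functions, whence $\size{\canoEq} \leq 3^{n^2}$. I expect no real obstacle here; the only point worth a sentence is why the base is $3$ rather than $4$ — precisely because condition~(2) can hold only when condition~(1) does, so the pair of relations in Definition~\ref{def:congruence-rel} collapses to a single ternary value per pair of states.
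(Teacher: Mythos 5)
Your proof is correct and is essentially the paper's own argument: the paper encodes each class $\class{u}$ as a set of labelled pairs in $\states \times \setnocond{0,1} \times \states$ and then observes that for each of the $n^2$ state pairs only three configurations are possible (both labels, only the $0$-label, or neither), which is exactly your ternary function $f_u$ in a different notation. The point you single out — that $q \fpathto{u}{\trans} r$ implies $q \pathto{u}{\trans} r$, so the base is $3$ rather than $4$ — is the same observation the paper relies on.
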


Since the congruence relation $\canoEq$ is defined by reachability between states, the result follows from the fact that we can map each of the $n^2$ pairs of states $(q, r)$ to either both $q \fpathto{u}{} r$ and $q \pathto{u}{} r$, or just $q \pathto{u}{} r$ or none of them.
Thus we have $\size{\canoEq} = \size{\quotient} \leq 3^{n^{2}}$.
We can also establish a lower bound for $\canoEq$, by means of a family of DBWs inspired by the proof of~\cite[Theorem 2]{AngluinF16}.

\begin{restatable}[]{lemma}{sizeOfCongr}
\label{lem:lower-bound-congrel}
There is a family of DBWs $\setnocond{\C_{n}}_{n \in \naturals}$ such that  each DBW $\C_{n}$ has $n+2$ states and the corresponding $\canoEq_{\C_{n}}$ is such that $\size{\canoEq_{\C_{n}}} \geq n!$.
\end{restatable}

An important property we want to have is that the congruence relation $\canoEq$ captures correctly the language of the NBW it correspond to.
This means that $\canoEq$ must not relate words in $\lang{\A}$ with those in $\infwords \setminus \lang{\A}$, that is, for each $\class{u}\in \quotient$ and $\class{v} \in \poswords/_{\canoEq}$, either $\class{u}\class{v}^{\omega} \subseteq \lang{\A}$ or $\class{u}\class{v}^{\omega} \subseteq \infwords \setminus \lang{\A}$.
Moreover, $\canoEq$ should cover the whole $\infwords$, that is, it saturates $\lang{\A}$, $\infwords \setminus \lang{\A}$ and $\infwords$.
This is formalized by the following \emph{saturation lemma} of the congruence relation $\canoEq$, which is a known result from~\cite{sistla1987complementation} that we adapt to our notations.

According to~\cite{sistla1987complementation,DBLP:books/el/leeuwen90/Thomas90}, given two classes $\class{u} \in \quotient, \class{v} \in \poswords/_{\canoEq}$, the $\omega$-language $\class{u}\class{v}^{\omega}$ is called \emph{proper} if $\class{u} \class{v} \subseteq \class{u}$ and $\class{v} \class{v} \subseteq \class{v}$.

\begin{lemma}[Saturation Lemma~\cite{sistla1987complementation,DBLP:books/el/leeuwen90/Thomas90}]
\label{lem:saturation-congruence}
\begin{enumerate}
\item
    For $\class{u} \in \quotient, \class{v} \in \poswords/_{\canoEq}$, if $\class{u}\class{v}^{\omega}$ is proper, then either $\class{u}\class{v}^{\omega} \cap \inflang{\A}= \emptyset$ or $\class{u}\class{v}^{\omega}\cap \subseteq \inflang{\A}$.
\item
    $\infwords = \bigcup \setcond{\class{u}\class{v}^{\omega}}{\class{u} \in \quotient, \class{v} \in \poswords/_{\canoEq}, \text{$\class{u}\class{v}^{\omega}$ is proper}}$.

\item
    $\infwords \setminus \lang{\A} = \bigcup \setcond{\class{u}\class{v}^{\omega}}{\class{u} \in \quotient, \class{v} \in \poswords/_{\canoEq}, \class{u}\class{v}^{\omega} \cap \lang{\A} = \emptyset, \text{$\class{u}\class{v}^{\omega}$ is proper}}$.
\end{enumerate}
\end{lemma}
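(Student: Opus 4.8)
I would prove the three items in the order (1), (2), (3); item~(3) will be an immediate corollary of the first two. The heart of the matter is item~(1), which says that, once $\class{u}\class{v}^{\omega}$ is proper, whether a word in it is accepted by $\A$ depends only on the pair of classes, not on the particular word chosen from it. Item~(2) is a Ramsey-type statement guaranteeing that proper languages cover $\infwords$, and then item~(3) drops out of (1) and (2).

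\emph{Item~(1).} Fix proper $\class{u}\class{v}^{\omega}$ and $w_1,w_2 \in \class{u}\class{v}^{\omega}$ with $w_1 \in \inflang{\A}$; by symmetry it suffices to put $w_2 \in \inflang{\A}$. Choose block decompositions $w_1 = y_0 y_1 y_2\cdots$ and $w_2 = z_0 z_1 z_2 \cdots$ with $y_0, z_0 \in \class{u}$ and $y_i, z_i \in \class{v}$ for $i\ge 1$. The properness hypotheses $\class{v}\class{v}\subseteq\class{v}$ and $\class{u}\class{v}\subseteq\class{u}$ imply, by induction, that every block-concatenation $y_{i+1}\cdots y_j$ lies in $\class{v}$ and every prefix $y_0\cdots y_j$ lies in $\class{u}$ (and symmetrically for the $z$'s). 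Take an accepting run $\run$ of $\A$ on $w_1$, say with first state $\iota\in\inits$, and let $r_j$ be the state $\run$ occupies at the end of $y_0\cdots y_j$. By the pigeonhole principle some state $q$ recurs infinitely often among $\setnocond{r_j}$; interleaving those occurrences with occurrences of an accepting state visited infinitely often by $\run$ (a small case distinction handles the possibility of empty blocks, in which case one shows $q$ itself is accepting) yields indices $j_0 < j_1 < \cdots$ with $r_{j_k}=q$ and with $\run$ visiting $\acc$ on the segment between the $j_k$-th and $j_{k+1}$-th boundaries. Hence $\iota \pathto{u'}{\trans} q$ for $u':=y_0\cdots y_{j_0}\in\class{u}$ and $q \fpathto{v'}{\trans} q$ for $v':=y_{j_0+1}\cdots y_{j_1}\in\class{v}$. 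Now $z_0 \canoEq u'$ and $z_i \canoEq v'$ for all $i\ge1$, so by the two clauses of Definition~\ref{def:congruence-rel} we get $\iota \pathto{z_0}{\trans} q$ and $q \fpathto{z_i}{\trans} q$ for every $i\ge 1$; concatenating these sub-runs gives an $\omega$-run of $\A$ on $w_2$ that hits $\acc$ infinitely often, i.e., $w_2 \in \inflang{\A}$.

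\emph{Item~(2).} Only $\subseteq$ needs an argument. Given $w=a_1a_2\cdots\in\infwords$, color each pair $i<j$ by $\class{a_{i+1}\cdots a_j}\in\quotient$; this is a finite coloring by Lemma~\ref{lem:size-of-canoEq-old}, so Ramsey's theorem provides an infinite monochromatic set $H=\setnocond{h_0<h_1<\cdots}$, say of color $C$. Since the $H$-segment from $h_0$ to $h_2$ factors through $h_1$, we get $C\,C\subseteq C$. A second pigeonhole---coloring each $h_j$ by $\class{a_1\cdots a_{h_j}}$---refines $H$ to an infinite subset on which this prefix class is constant, equal to some $U$; since the refined set is still monochromatic of color $C$, the same factoring argument now gives $U\,C\subseteq U$. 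Therefore $\class{u}\class{v}^{\omega}$ with $\class{u}:=U$ and $\class{v}:=C$ is proper and contains $w$, proving $\infwords \subseteq \bigcup\setcond{\class{u}\class{v}^{\omega}}{\class{u},\class{v}\in\quotient,\ \class{u}\class{v}^{\omega}\text{ proper}}$; the reverse inclusion is trivial.

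\emph{Item~(3).} $\supseteq$ is clear. For $\subseteq$, take $w\in\infwords\setminus\lang{\A}$; by~(2) it lies in some proper $\class{u}\class{v}^{\omega}$, and since $w\notin\lang{\A}$, item~(1) forces $\class{u}\class{v}^{\omega}\cap\lang{\A}=\emptyset$. The step I expect to be most delicate is the run-transplantation in item~(1): one has to use properness exactly to keep block-concatenations inside $\class{v}$ and prefixes inside $\class{u}$, and the interleaving has to be set up so that an accepting state is guaranteed \emph{within} each transplanted block; the double pigeonhole needed in item~(2) to make the prefix class stabilize is the other spot where the obvious one-line argument fails.
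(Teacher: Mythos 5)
Your proof is correct, and it is essentially the classical argument behind this lemma: the paper itself gives no proof (it cites the result from \cite{sistla1987complementation,DBLP:books/el/leeuwen90/Thomas90}), and your combination of run transplantation via the two clauses of Definition~\ref{def:congruence-rel} for item~(1), the Ramsey-plus-pigeonhole covering for item~(2), and deriving item~(3) from (1) and (2) is exactly the standard route taken in those references. The two delicate points you flag (keeping block concatenations inside $\class{v}$ and prefixes inside $\class{u}$ via properness, and stabilizing the prefix class by a second pigeonhole) are indeed the places where care is needed, and you handle both correctly.
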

Thus, it suffices to just consider proper languages to get the languages $\infwords$ (cf. Item~(2) of Lemma~\ref{lem:saturation-congruence}) and $\infwords \setminus \lang{\A}$ (cf. Item~(3) of Lemma~\ref{lem:saturation-congruence}).
This means that the congruence relation $\canoEq$ allows us to obtain $\lang{\A}$ (resp., $\infwords \setminus \lang{\A}$) by identifying the exact set of proper languages that are inside $\lang{\A}$ (resp., outside $\lang{\A}$).
In the remainder of the paper, we show that we can obtain similar saturation lemmas (cf.~Lemmas~\ref{lem:saturation-impoved-fdfws} and~\ref{lem:saturation-opt-fdfws}) for the congruence relations we are going to propose to obtain $\lang{\A}$ or the complementary language $\infwords \setminus \lang{\A}$.

\subsection{Improved Congruence Relations for NBWs}
\label{sec:improved-ramsey-for-fdfws}

In this section, we introduce relations $\icanoEq$ and $\proEq_{u}$, $u \in \finwords$, that can never have larger index than the classical congruence relation $\canoEq$ (cf.~Lemma~\ref{lem:size-cano-eq}) while possibly being exponentially smaller than $\canoEq$ (cf.~Theorem~\ref{thm:comparisonCanoEq}).
When restricted to DBWs, we reduce the worst-case blow-up from $\Omega(n!)$ (cf.~Lemma~\ref{lem:lower-bound-congrel}) to $\bigO(n^{2})$ (cf.~Theorem~\ref{thm:size-dbw-improved-fdfw}).
Still, they capture correctly $\lang{\A}$ and $\infwords \setminus \lang{\A}$ (cf.~Lemma~\ref{lem:saturation-impoved-fdfws}).

We improve the classical congruence relation $\canoEq$ given in  Sect.~\ref{ssec:classical-cr} based on the following key observations:
(1) we can use different congruence relations to process the finite prefix $u$ and the periodic word $v$ of a UP-word $uv^{\omega}$, separately, in a manner similar to FDFWs;
(2) the assumption $\class{v} \class{v} \subseteq \class{v}$ in proper languages is not necessary, according to~\cite{sistla1987complementation};
(3) inspired by~\cite{DBLP:conf/fossacs/BreuersLO12}, we can consider only reachable states in $\A$, which allows us to use just \emph{right congruences} instead of congruences as $\canoEq$.
We defer the comparison of our work and~\cite{DBLP:conf/fossacs/BreuersLO12} to Remark~\ref{rmk:comparison}.

Instead of considering every pair of states $(q, r)$ of $\A$ to define the congruence relation $\canoEq$ (cf.~Definition~\ref{def:congruence-rel}), we process the finite prefixes $u$ by a simple subset construction over the states of $\A$, obtaining the following relation $\icanoEq$ that is obviously a right congruence.

\begin{definition}[RC $\icanoEq$]
\label{def:right-congruence}
For $u_{1}, u_{2} \in \finwords$, $u_{1} \icanoEq u_{2}$ iff $\trans(\inits, u_{1}) = \trans(\inits, u_{2})$.
\end{definition}

As one can expect, by relaxing the conditions on the relation $\icanoEq$, we reduce how large its index can be, from $3^{n^{2}}$ (cf.~Lemma~\ref{lem:lower-bound-congrel}) to $2^{n}$, showing also that $\icanoEq$ is a right congruence of finite index.

\begin{restatable}{lemma}{sizeOfCanoEq}
\label{lem:size-cano-eq}
Let $\icanoEq$ be the right congruence in Definition~\ref{def:right-congruence}.
Then $\size{\icanoEq} \leq 2^{n}$.
\end{restatable}

Differently from $\canoEq$ (see, e.g., Lemma~\ref{lem:saturation-congruence}), we will use $\icanoEq$ only to process the finite prefix $u$ of a UP-word $uv^{\omega}$;
to process the period $v$, we now introduce the right congruence $\proEq_{u}$, by considering only states reachable from $\trans(\inits, u)$.

\begin{definition}[RC $\proEq_{u}$]
\label{def:pro-right-congruence}
For $u, v_{1}, v_{2} \in \finwords$, we say $v_{1} \proEq_{u} v_{2}$ iff for all states $q \in \trans(\inits, u) $ and $r \in \states$ of $\A$, (1) $q \pathto{v_{1}}{\trans} r$ iff $q \pathto{v_{2}}{\trans} r$ and (2) $q \fpathto{v_{1}}{\trans} r$ iff $q \fpathto{v_{2}}{\trans} r$.
\end{definition}

Compared to Definition~\ref{def:congruence-rel}, we only take into account the states that can be reached from $\trans(I, u)$, as opposed to the whole set $\states$.
In this way we obtain a right congruence relation that is coarser than $\canoEq$ for the periodic finite words.

\begin{restatable}{theorem}{compactnessOfImprovedRc}
\label{thm:compactness}
Let $\canoEq$ be the congruence relation defined in Definition~\ref{def:congruence-rel}.
For each $u, v_{1}, v_{2} \in \finwords$, we have that $v_{1} \canoEq v_{2}$ implies that $v_{1} \proEq_{u} v_{2}$.

Similarly, $u_{1} \canoEq u_{2}$ implies that $u_{1} \icanoEq u_{2}$ for all $u_{1}, u_{2} \in \finwords$.
\end{restatable}

Although the right congruence relation $\proEq_{u}$ is coarser than its predecessor $\canoEq$, it has the same upper bound for its index (cf.~Lemma~\ref{lem:size-of-canoEq-old}).

\begin{restatable}{lemma}{sizeOfproEq}
\label{lem:size-pro-eq}
Given $u \in \finwords$, let $\proEq_{u}$ be as defined in Definition~\ref{def:pro-right-congruence}.
Then $\size{\proEq_{u}} \leq 3^{n^{2}}$.
\end{restatable}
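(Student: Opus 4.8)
The plan is to mirror the counting argument used for the classical congruence relation $\canoEq$ in Lemma~\ref{lem:size-of-canoEq-old}, but restricted to the pairs of states that are relevant to $\proEq_{u}$. Fix $u \in \finwords$ and write $S_{u} = \trans(\inits, u)$ for the set of states reachable from the initial states on reading $u$. By Definition~\ref{def:pro-right-congruence}, the equivalence class $\proclassv{v}{u}$ of a word $v$ is completely determined by the following data: for each pair $(q, r)$ with $q \in S_{u}$ and $r \in \states$, which of the three mutually exclusive cases holds --- (i) $q \fpathto{v}{\trans} r$ (which in particular implies $q \pathto{v}{\trans} r$), (ii) $q \pathto{v}{\trans} r$ but not $q \fpathto{v}{\trans} r$, or (iii) there is no run of $\A$ from $q$ to $r$ over $v$ at all. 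Note that case (i) subsumes reachability, so these three cases genuinely partition the possibilities for each pair, exactly as in the proof of Lemma~\ref{lem:size-of-canoEq-old}.

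First I would argue that the map sending each class $\proclassv{v}{u}$ to this tuple of ternary values, indexed by pairs in $S_{u} \times \states$, is injective: this is immediate from the definition of $\proEq_{u}$, since two words with the same tuple satisfy conditions (1) and (2) of Definition~\ref{def:pro-right-congruence} for every relevant pair and are therefore $\proEq_{u}$-equivalent, and conversely $\proEq_{u}$-equivalent words induce the same tuple. Hence $\size{\proEq_{u}}$ is at most the number of such tuples. Since there are $\size{S_{u}} \cdot \size{\states} \leq n \cdot n = n^{2}$ relevant pairs and three choices per pair, we get $\size{\proEq_{u}} \leq 3^{\size{S_{u}} \cdot n} \leq 3^{n^{2}}$, as claimed. (In fact this also shows the slightly sharper bound $3^{\size{S_{u}} \cdot n}$, which is what makes $\proEq_{u}$ potentially much smaller than $\canoEq$ when $S_{u}$ is small; that refinement is presumably exploited in Theorem~\ref{thm:compactness}, but for the present statement the crude bound $n^{2}$ in the exponent suffices.)

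There is essentially no obstacle here --- the argument is a routine cardinality count --- but the one point that needs a line of care is checking that $\proEq_{u}$ is genuinely an equivalence relation of finite index (reflexivity, symmetry, transitivity are all inherited from ``iff'' in the definition) and, if one wants to be thorough, that it is a right congruence, i.e.\ that $v_{1} \proEq_{u} v_{2}$ implies $v_{1}w \proEq_{u} v_{2}w$ for every $w$; this follows by composing runs, since for $q \in S_{u}$ a run from $q$ to $r$ over $v_{i}w$ factors through some intermediate state $p$ reached over $v_{i}$, and the acceptance-visit condition is preserved by this factorization using conditions (1) and (2) on the $v_i$-prefix together with the fixed behavior of $w$. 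These observations are standard and parallel the treatment of $\canoEq$, so the proof is short.
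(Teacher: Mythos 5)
Your proof is correct and follows essentially the same route as the paper: it encodes each class $\proclassv{v}{u}$ by, for every pair in $\trans(\inits,u)\times\states$, one of three mutually exclusive reachability statuses, and counts at most $3^{|\trans(\inits,u)|\cdot n}\leq 3^{n^2}$ such encodings, exactly mirroring the argument for Lemma~\ref{lem:size-of-canoEq-old}. The extra remarks on $\proEq_u$ being a right congruence are fine but not needed for the stated bound.
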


Despite this common upper bound, $\size{\proEq_{u}}$ can be exponentially smaller than $\size{\canoEq}$, as witnessed by the family of NBWs $\setnocond{\B_{n}}_{n \in \naturals}$ depicted in Fig.~\ref{fig:family-nbw-ryc-smaller}.

\begin{restatable}{theorem}{comparisonCanoEq}
\label{thm:comparisonCanoEq}
Given  $u \in \finwords$, let $\canoEq$ be the congruence relation in Definition~\ref{def:congruence-rel} and $\proEq_{u}$ the right congruence in Definition~\ref{def:pro-right-congruence}.
There is a family of NBWs $\setnocond{\B_{n}}_{n \in \naturals}$ with $n+3$ states for which $\size{\canoEq} \geq n!$ and $\size{\proEq_{u}} \leq (n+3) + 2$.
\end{restatable}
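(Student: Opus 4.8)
The plan is to take for $\B_{n}$ the automaton depicted in Figure~\ref{fig:family-nbw-ryc-smaller} and to exploit the structural gap between the two congruences: by Definition~\ref{def:congruence-rel}, $\canoEq$ records for a word its \emph{entire} reachability relation over $\states\times\states$ (together with the accepting-reachability relation), whereas by Definition~\ref{def:pro-right-congruence} the relation $\proEq_{u}$ only records reachability \emph{from the states in} $\trans(\inits,u)$. Accordingly, I would take $\B_{n}$ to be a deterministic automaton with $n+3$ states over $\setnocond{0, 1, \cdots, n}$ whose transition monoid, restricted to a designated block of $n$ of its states, realizes the full symmetric group $S_{n}$: for instance, $n-1$ of the letters act on that block as the adjacent transpositions $(1\,2), (2\,3), \ldots, (n{-}1\,n)$ and as the identity on the remaining states, while the other letters and states serve only to feed the initial state into the block and to implement the \buchi acceptance condition (the precise automaton being the one drawn in Figure~\ref{fig:family-nbw-ryc-smaller}). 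The reason for insisting on determinism is that $\trans(\inits,u)$ is then always a single state, so that ``reachability from $\trans(\inits,u)$'' collapses to a function --- and this is what will keep $\proEq_{u}$ small.

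First, for the lower bound $\size{\canoEq}\ge n!$: since the adjacent transpositions generate $S_{n}$, for each $\pi\in S_{n}$ there is a word $w_{\pi}$ over the transposition letters whose induced map on the block $\setnocond{1, \cdots, n}$ equals $\pi$, and reading $w_{\pi}$ starting from a block state stays inside the block. Given $\pi\neq\pi'$, pick $i$ with $\pi(i)\neq\pi'(i)$; determinism gives $i\pathto{w_{\pi}}{\trans}\pi(i)$ but not $i\pathto{w_{\pi'}}{\trans}\pi(i)$, so $w_{\pi}\not\canoEq w_{\pi'}$ by clause~(1) of Definition~\ref{def:congruence-rel}. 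Hence the $n!$ words $\setcond{w_{\pi}}{\pi\in S_{n}}$ lie in pairwise distinct classes, i.e.\ $\size{\canoEq}=\size{\quotient}\ge n!$.

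Next, for the upper bound $\size{\proEq_{u}}\le 2(n+3)$: fix $u\in\finwords$. Since $\B_{n}$ is a complete deterministic automaton, $\trans(\inits,u)=\setnocond{q}$ for a single state $q$, and by Definition~\ref{def:pro-right-congruence} the $\proEq_{u}$-class of a word $v$ is determined by just (i) the unique state $r$ with $q\pathto{v}{\trans}r$, and (ii) whether the corresponding run visits $\acc$, i.e.\ whether $q\fpathto{v}{\trans}r$. There are at most $n+3$ possibilities for $r$ and $2$ for the bit, hence at most $2(n+3)$ classes --- a bound independent of $u$.

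The proof is short once the right $\B_{n}$ is in hand, so the real work is the construction: $\B_{n}$ must simultaneously pack all of $S_{n}$ into the reachability relations over linearly many states (which forces $\canoEq$ up to $n!$) and be deterministic (so that $\proEq_{u}$, seeing only reachability from the single state $\trans(\inits,u)$, is confined to at most $2(n+3)$ behaviours). The step I expect to need the most care is the bound on $\proEq_{u}$ \emph{uniformly} in $u$ --- one must check it even for words $u$ that drive the run deep into the permutation block --- and it works precisely because determinism forces $\trans(\inits,u)$ to be a singleton; a nondeterministic variant could have $\trans(\inits,u)$ be a large subset and so reinstate the blow-up. Everything else --- that adjacent transpositions generate $S_{n}$, and the count for $\proEq_{u}$ --- is routine. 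I would also note that $\B_{n}$ is already a DBW, so this example shows the $n!$ gap survives within the deterministic class, complementing Theorem~\ref{lem:size-dbw-improved-fdfw}, where the improved congruences remain polynomial on DBWs.
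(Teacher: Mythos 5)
Your argument is essentially sound, but it takes a genuinely different route from the paper, and your identification of your automaton with the one in Figure~\ref{fig:family-nbw-ryc-smaller} is not accurate. The paper's witness $\B_{n}$ is \emph{not} a deterministic permutation-group automaton with transposition letters: it is a nondeterministic NBW over $\setnocond{0,1,\cdots,n}$ with accepting states $q$ and $q_{-1}$, in which each state $q_{j}$ ``waits'' for its own letter $j$ and the $n!$ lower bound is obtained by showing that any two distinct permutation words $i_{1}\cdots i_{n}$ and $i'_{1}\cdots i'_{n}$ of $\setnocond{1,\cdots,n}$ induce different reachability maps (the first position where they differ already separates them under clause~(1) of Definition~\ref{def:congruence-rel}); correspondingly, the paper's upper bound $\size{\proEq_{u}}\leq 2\times(n+3)$ requires a short case analysis, because $\trans(\inits,u)$ can be the two-element set $\setnocond{q_{0},q_{-1}}$ and words containing $0$ behave specially, so one argues that for each reachable set there are at most two possible labelled-reachability patterns. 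Since the theorem is an existence statement, your substitute witness is legitimate: a complete deterministic $(n+3)$-state automaton whose letters generate $S_{n}$ on an $n$-state block does give $\size{\canoEq}\geq n!$ (distinct permutations induce distinct transition functions), and determinism plus completeness makes $\size{\proEq_{u}}\leq 2\times(n+3)$ immediate and uniform in $u$ --- note that completeness is genuinely needed here, since otherwise an extra ``no run'' class would push the count to $2(n+3)+1$. What your route buys is a much simpler upper-bound argument (no case analysis, no nondeterminism to control) and the observation that the $n!$ gap already occurs for DBWs; what it costs is that you must actually spell out the automaton (the feeding/acceptance part you leave informal), and you lose the paper's single reusable witness, which is deliberately nondeterministic and is later reused (via its DBW variant $\B'_{n}$ and the connection to~\cite{AngluinF16}) in Theorem~\ref{thm:comparison-with-cnp}.
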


\begin{figure}[t]
    \centering
    \includegraphics[scale=0.8]{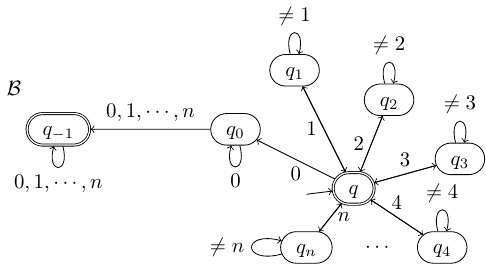}
    \caption{The family of NBWs $\setnocond{\B_{n}}_{n \in \naturals}$ over the alphabet $\setnocond{0, 1, \cdots, n }$ with $n+3$ states for which $\size{\canoEq}$ is at least $n!$ while $\size{\proEq_{u}}$ is at most $(n+3) + 2$ for each $u \in \finwords$; the initial state is $q$ and $\acc = \setnocond{q, q_{-1}}$.
    We remark that this NBW is inspired by a DBW from~\cite{AngluinF16}. However, our NBW is not deterministic.}
    \label{fig:family-nbw-ryc-smaller}
\end{figure}

The idea underlying this result is that in $\icanoEq$, there are at most $n+4$ equivalence classes, which correspond to the singletons $\setnocond{r}$ with $r \in \states$ and the set $\setnocond{q_{-1},q_{0}}$.
For each of these classes, say $\iclass{u} = \iclass{1}$, the associated classes $[v]_{\proEq_{u}}$ can correspond to at most $(n+3)+2$ configurations of (accepting) runs, like $\setnocond{q_{1}\pathto{v}{} q_{1}}$ and $\setnocond{q_{1}\pathto{v}{} q_{0}, q_{1}\fpathto{v}{} q_{0}, q_{1}\pathto{v}{} q_{-1}, q_{1}\fpathto{v}{} q_{-1}}$.
On the other hand, since $\canoEq$ must take care of both prefixes and periods, different permutations of $\setnocond{1, \cdots, n}$ taken as prefixes cannot be equivalent, thus $\size{\canoEq} \geq n!$.

When working with DBWs, the overall index of the right congruence relations $\bigcup_{u \in \finwords} \setnocond{\proEq_{u}}$ can be exponentially better than $\canoEq$ (cf.~Lemma~\ref{lem:lower-bound-congrel}).

\begin{restatable}{theorem}{dbwSizeImprovedFDFW}
\label{thm:size-dbw-improved-fdfw}
Let $\A$ be a DBW with $n$ states.
Then $\Sigma_{\iclass{u} \in \iquotient} \size{\proEq_{u}} \in \bigO(n^{2})$.
\end{restatable}

Similarly to Lemma~\ref{lem:saturation-congruence}, the saturation lemma for $\canoEq$, the right congruences $\icanoEq$ and $\proEq_{u}$ with $u \in \finwords$ also allow us to recognize \emph{exactly} $\lang{\A}$ or its complement $\infwords \setminus \lang{A}$:
for these relations we have again that the $\omega$-language $\iclass{u} \proclass{v}^{\omega}$ with $uv \icanoEq u$ is either completely inside $\lang{\A}$ or outside $\lang{\A}$, even if we drop the requirement $\proclass{v} \proclass{v} \subseteq \proclass{v}$.

\begin{restatable}[Saturation Lemma for $(\icanoEq, \cup_{u\in\finwords} \setnocond{\proEq_{u}})$]{lemma}{saturationLemmaImproved}
\label{lem:saturation-impoved-fdfws}
\begin{enumerate}
\item For $u \in \finwords, v \in \poswords$, if $uv\icanoEq u$, then either $\iclass{u}\proclass{v}^{\omega} \cap \lang{\A} = \emptyset$ or $\iclass{u}\proclass{v}^{\omega} \subseteq \lang{\A}$.

\item $\infwords = \bigcup \setcond{\iclass{u}\proclass{v}^{\omega}}{u \in \finwords, v \in \poswords, uv\icanoEq u}$.
\item $\infwords \setminus \lang{\A} = \bigcup \setcond{\iclass{u}\proclass{v}^{\omega}}{u \in \finwords, v \in \poswords, uv\icanoEq u, \iclass{u}\proclass{v}^{\omega} \cap \lang{\A} = \emptyset}$.
\end{enumerate}
\end{restatable}
By definition of $\icanoEq$ and $\proEq_{u}$, if $uv \icanoEq u$, then the set of states $\trans(\inits, u)$ is visited infinitely often when reading the word $w = uv^{\omega}$;
it also implies $uv^{j} \icanoEq u$ for each $j \geq 0$.
Moreover, if $w \in \lang{\A}$, then there is a run of $\A$ over $w$ that is accepting, i.e., the run visits infinitely often states in $\acc$.
This happens when dealing with $v^{\omega}$, since $u$ is a finite word;
thus $\A$ visits an accepting state when reading $v$ on the way from $\trans(\inits, u)$ to $\trans(\trans(\inits, u), v) = \trans(\inits, u)$.

These properties allow us to prove Item~(1), i.e., that if $w \in \iclass{u}\proclass{v}^{\omega} \cap \lang{\A}$, then for each word $w' \in \iclass{u}\proclass{v}^{\omega}$ we have $w' \in \lang{\A}$, because $w'$ can be written as $w' = u' \cdot v'_{1} \cdot v'_{2} \cdots$ with $u' \in \iclass{u}$ and $v'_{j} \in \proclass{v}$ for each $j \geq 1$.
Thus for $w'$ we have that $\A$ visits infinitely often the set $\trans(\inits, u'v'_{j}) = \trans(\inits, u') = \trans(\inits, u)$ while visiting an accepting state on the way from $\trans(\inits, u')$ to $\trans(\trans(\inits, u'), v'_{j}) = \trans(\inits, u')$ since $v \proEq_{u} v'_{j}$.
Item~(2) holds by considering only the UP-words (cf.~Theorem~\ref{thm:upword-omega-regular-language}), for which we have that for each $w=u'v'^{\omega} \in \upword{\infwords}$, we can construct a decomposition $(u=u'v'^h, v=v'^{k})$ of $w$ with $u \icanoEq uv$ for some $h, k \geq 1$ since $\icanoEq$ is of finite index.
By combining Items~(1) and~(2), to obtain $\infwords\setminus\lang{\A}$ we can just take the union of all languages $\iclass{u}\proclass{v}^{\omega}$ such that $\iclass{u}\proclass{v}^{\omega} \cap \lang{\A} = \emptyset$.

\section{Optimal Congruence Relations for NBWs}
\label{sec:optimal-rc}

The right congruence relations we introduced in Sect.~\ref{sec:improved-ramsey-for-fdfws}, despite improving $\canoEq$, still lead to a blow-up of $3^{\bigO(n^{2})}$ (cf.~Lemma~\ref{lem:size-pro-eq}).
The main cause of the exponent $n^{2}$ is that it is possible that each of the $n$ states is a predecessor of a state $r$ over the word $v$, i.e., it reaches $r$ over the word $v$.
To avoid having to consider all such precedessors, we look for specific representatives, in order to reduce the blow-up.
Inspired by~\cite{Fogarty13,DBLP:journals/iandc/FogartyKVW15}, we introduce a \emph{preorder} on the states based on the transition structure of $\A$;
we then use such preorder to select the representatives.
In particular, if the predecessors of $r$ can be reduced to only one representative for a given $v$, we obtain that the blow-up reduces to $2^{\bigO(n\log n)}$.
The representative we are going to use is the maximal equivalence class induced by the preorder among at most $n$ equivalence classes.
Breuers \emph{et al.}~\cite{DBLP:conf/fossacs/BreuersLO12} also proposed a preorder-based optimization to improve \rbc;
see Remark~\ref{rmk:comparison} for a detailed comparison.

Inspired by~\cite{Fogarty13,DBLP:journals/iandc/FogartyKVW15}, in the remainder of this section we present a preorder $\preceq_{u}$ on the set of states $\trans(\inits, u)$, $u \in \finwords$, yielding optimal relations for NBWs.
To that end, we first show how to compare the finite runs of $\A$ over a given word.

Fix a finite word $u$;
given a run $\pi$ of $\A$ over $u$, recall that $\wordletter{\pi}{i}$ denotes the $i$-th element (i.e. state) of $\pi$.
For each run $\pi$, we can define the function $\dirac(\pi) = \dirac(\pi[1]) \dirac(\pi[2]) \dots$  where $\dirac(s) = 1$ if $s \in \acc$ and $0$ if $s \notin \acc$.
In other words, each run $\pi$ can be encoded as a binary sequence.
Given two runs $\pi, \pi'$ of $\A$ over $u$, we say that $\pi'$ is \emph{greater} than $ \pi$, denoted by $\pi' > \pi$,  if there is a prefix $\alpha 1$ of $\dirac(\pi')$ such that $\alpha 0$ is a prefix of $\dirac(\pi)$.
That is, $\pi'$ is greater than $ \pi$ if there is an integer $1 \leq j \leq \size{u}+1$ such that $\pi'[j] \in \acc$, $\pi[j] \notin \acc$ and $\pi'[i] \in \acc \Longleftrightarrow \pi[i] \in \acc$ for all $1 \leq i < j$ ($i$ does not exist when $j = 1$).

Let $\Pi_{u,q}$ be the set of runs of $\A$ over $u$ starting from $\inits$ with the last state being $q$.
For each state $q \in \trans(\inits, u)$, there may be several runs in $\Pi_{u, q}$;
the set of \emph{maximal} runs in $\Pi_{u, q}$ is defined as $\max(\Pi_{u, q}) = \setcond{\pi' \in \Pi_{u,q}}{\forall \pi \in \Pi_{u,q}, \pi \not> \pi'}$.
The following result is a direct consequence of the definition above.
\begin{proposition}
\label{prop:eq-max-runs}
For two runs $\pi_{q}, \pi'_{q} \in \max(\Pi_{u, q})$, for each $1 \leq i \leq \size{u}+1$ we have that $\pi_{q}[i] \in \acc \Longleftrightarrow \pi'_{q}[i] \in \acc$.
\end{proposition}
That is, all runs in $\max(\Pi_{u,q})$ have the same image under $\dirac$.

In the following, we define the preorder $\preceq_{u}$ on the set of states $P = \trans(\inits, u)$ by comparing the sets of maximal runs $\max(\Pi_{u, q})$ and $\max(\Pi_{u, r})$ for $q, r \in P$.
\begin{definition}[Preorder $\preceq_{u}$]
\label{def:preorder-u}
Given $u \in \finwords$,
\begin{itemize}
\item
    if $u = \emptyword$, then for $q, r \in \inits = \trans(\inits, u)$, we define $q \preceq_{\emptyword} r$ iff $q \in \acc$ implies $r \in \acc$.
    Therefore $q \prec_{\emptyword} r$ iff $q \notin \acc$ and $r \in \acc$;
\item
    when $u \in \poswords$, for $q, r \in \trans(\inits, u)$, $q \preceq_{u} r$ if the runs $\pi_{q} \in \max(\Pi_{u,q})$ are not greater than the runs $\pi_{r} \in \max(\Pi_{u,r})$.
    In particular, $q \prec_{u} r$ if the runs $\pi_{r} \in \max(\Pi_{u,r})$ are greater than the runs $\pi_{q} \in \max(\Pi_{u,q})$.
\end{itemize}
\end{definition}

One can verify that $\preceq_{u}$ is a binary relation that is reflexive (i.e., for each $q \in \states$, $q \preceq_{u} q$) and transitive (i.e., for each $q,r,s \in \states$, $q \preceq_{u} r$ and $r \preceq_{u} s$ implies $q \preceq_{u} s$), so it is a preorder;
we also have $q \prec_{u} r$ whenever $q \preceq_{u} r$ and $r \not \preceq_{u} q$ and we write $q \simeq_{u} r$ whenever $q \preceq_{u} r$ and $r \preceq_{u} q$.
Intuitively, we have $q \prec_{u} r$ if there is a run from an initial state to $r$ on $u$ that sees an accepting state earlier than all paths from the initial states to $q$ on $u$.
That is, there is a prefix $\alpha 1$ of $\dirac(\pi_{r})$ for a run $\pi_{r}$ to $r$ such that $\alpha 0$ is a prefix of $\dirac(\pi_{q})$ for all runs $\pi_{q}$ to $q$.

Due to Proposition~\ref{prop:eq-max-runs}, if there is a run $\pi_{r} \in \max(\Pi_{u, r})$ greater than a run in $\max(\Pi_{u,q})$, then all runs in $\max(\Pi_{u,r})$ are greater than the runs in $\max(\Pi_{u,q})$.

\begin{example}
\label{ex:preorder}
As an example, consider the NBW $\B_{n}$ depicted in Fig.~\ref{fig:family-nbw-ryc-smaller} and let $P = \trans(\setnocond{q}, 00) = \setnocond{q_{-1}, q_{0}}$;
we have $q_{0} \prec_{00} q_{-1}$ on this set since there is a run from the initial state $q$ to $q_{-1}$ that sees the accepting state $q_{-1}$ after inputting the second $0$ while all runs from $q$ to $q_{0}$ on $00$ do not visit an accepting state right after inputting the second $0$.
\end{example}

\begin{remark}
The preorder $\preceq_{u}$ in Definition~\ref{def:preorder-u} shares the same idea of comparing the maximal runs with the \emph{lexicographical order} of vertices at the same level of the run direct acyclic graph (DAG) over an $\omega$-word $w$ used in~\cite{Fogarty13};
see~Appendix~\ref{app:comparisonProfile} for detailed connection between their and our works.
The difference between our work and the work in~\cite{Fogarty13} is that the latter applies this idea to Slice-based~\cite{kahler2008complementation} and Rank-based complementation algorithms~\cite{kupferman2001weak} while ours is designed for \rbc.
A similar idea was also used in~\cite{DBLP:journals/iandc/FogartyKVW15} for determinizing NBWs.
\end{remark}

As an immediate consequence of Definition~\ref{def:preorder-u}, given two states $q, r \in \trans(\inits, u)$ such that $q \preceq_{u} r$, we have that a run from an initial state to $q$ that visits an accepting state mandates that there must be a run from an initial state to $r$ that also visits accepting states.
\begin{corollary}
\label{coro:preceq-init-to-q}
Let $q, r \in \trans(\inits, u)$, with $q \preceq_{u} r$.
Then $\iota_{q} \fpathto{u}{} q$ for some initial state $\iota_{q} \in \inits$ implies $\iota_{r} \fpathto{u}{} r$ for some initial state $\iota_{r} \in \inits$.
\end{corollary}

Let $P = \trans(\inits, u)$.
The preorder $\preceq_{u}$ defines a partition of $P$ in which states in the same set are equivalent under $\preceq_{u}$.
By abuse of terminology, we call the set $[r]_{\preceq_{u}} = \setcond{r' \in P}{r ' \simeq_{u} r}$ the equivalence class of $r \in P$ under $\preceq_{u}$;
we denote by $P/_{\preceq_{u}}$ the set of all such equivalence classes.
Since every two states $q, r \in P$ are \emph{comparable} under $\preceq_{u}$, we define the maximal equivalence class of $P$ under $\preceq_{u}$ as $\max_{\preceq_{u}}(P) = \max(P/_{\preceq_{u}})= \setcond{r \in P}{\text{$r' \preceq_{u} r$ for all $r' \in P$}}$;
moreover, the equivalence classes in $P/_{\preceq_{u}}$ can be linearly ordered by $[r]_{\preceq_{u}} \classpreq_{u} [r']_{\preceq_{u}} \Longleftrightarrow r {\preceq_{u}} r'$;
so we have $[r]_{\preceq_{u}}\classpre_{u} [r']_{\preceq_{u}}$ if $[r]_{\preceq_{u}} \classpreq_{u} [r']_{\preceq_{u}}$ and $[r'] \not \classpreq_{u} [r]_{\preceq_{u}}$.
Here $\classpreq_{u}$ is a partial order, not a preorder, which implies that $[r]_{\preceq_{u}} = [r']_{\preceq_{u}}$ iff $[r]_{\preceq_{u}} \classpreq_{u} [r']_{\preceq_{u}}$ and $[r']_{\preceq_{u}} \classpreq_{u} [r]_{\preceq_{u}}$.

An interesting property of the states $\A$ visits on the maximal runs from the initial states $\inits$ to a state $q \in \trans(\inits, uv)$ over the finite word $uv$ is that they are step by step all equivalent under the preorder w.r.t. the prefix of $uv$.
We denote by $\max(\Pi_{uv, q})|_u$ the set $\setcond{\pi[\size{u} + 1]}{\pi \in \max(\Pi_{uv, q})}$, i.e., the set of states reached from initial states after inputting $u$ on the maximal runs to $q$ over $uv$.

\begin{restatable}{lemma}{subsetOfMaxPreds}
\label{lem:subseteq-class-u}
Given $u, v \in \finwords$ and $q \in \trans(\inits, uv)$, let $[p]_{\preceq_{u}} = \max_{\preceq_{u}} \setcond{[p']_{\preceq_{u}} \in \trans(\inits, u)/_{\preceq_{u}}}{p' \pathto{v}{} q}$.
Then for each $q' \in [q]_{\preceq_{uv}}$, $\max(\Pi_{uv, q'})|_u \subseteq [p]_{\preceq_{u}}$.
\end{restatable}

By definition of $[q]_{\preceq_{uv}}$, all the maximal runs from the initial states to the states in $[q]_{\preceq_{uv}}$ have the same image under $\dirac$.
As a consequence, the states on these runs reached after reading $u$ must belong to the same equivalence class $[p]_{\preceq_{u}}$ under $\preceq_{u}$, which is also the maximal equivalence class  under $\preceq_{u}$ that reaches $[q]_{\preceq_{uv}}$.
If this would not be the case, then we would be able to find runs to $[q]_{\preceq_{uv}}$ greater than the current maximal runs by visiting $[p]_{\preceq_{u}}$.

A useful property of these maximal runs is that they share visits to accepting states;
more precisely, if one of the maximal runs on a word $uv$ to a state $q_{1} \in [q]_{\preceq_{uv}}$ visits an accepting state while reading $v$, then all other maximal runs on the same word to some other state $q_{2} \in [q]_{\preceq_{uv}}$ do.
The motivation for this is again the maximality of the runs: if one run visits an accepting state while another does not, then the former is greater than the latter, which implies that the latter cannot be maximal.
This property is formalized below.

\begin{restatable}{lemma}{saturationAccPreorder}
\label{lem:all-or-none-acc-preorder}
Let $u, v \in \finwords$ and $q \in \trans(\inits, uv)$.
For $q_{1}, q_{2} \in [q]_{\preceq_{uv}}$, $p_{1} \in \max(\Pi_{uv, q_{1}})|_u$ and $p_{2} \in \max(\Pi_{uv, q_{2}})|_u$, it holds that $p_{1} \fpathto{v}{} q_{1} $ iff $ p_{2} \fpathto{v}{}q_{2}$.
\end{restatable}
Similarly to Lemma~\ref{lem:subseteq-class-u}, a consequence of  Lemma~\ref{lem:all-or-none-acc-preorder} is that, for a given finite word $u$ and $q_{1} \simeq_{u} q_{2}$, the maximal runs in $\max(\Pi_{u, q_{1}})$ and in $\max(\Pi_{u, q_{2}})$ visit accepting states at the same moment, i.e., they have the same image under $\dirac$.

The preorder $\preceq_{u}$ enjoys several properties about the states and maximal runs of $\A$ for the given finite word $u$.
Thus, instead of tracing only the set of reachable states, as done by the right congruence $\icanoEq$ (cf.~Definition~\ref{def:right-congruence}), we also trace the reachable states $\trans(\inits, u)$ with the preorder $\preceq_{u}$ to get the right congruence $\canoEq^{o}$.

\begin{definition}[RC $\canoEq^{o}$]
\label{def:opt-canoEq}
For $u_{1}, u_{2} \in \finwords$, we say $u_{1} \canoEq^{o} u_{2}$ iff $\trans(\inits, u_{1})/_{\preceq_{u_{1}}} = \trans(\inits, u_{2})/_{\preceq_{u_{2}}}$.
\end{definition}
Consider again the NBW $\B_{n}$ in Fig.~\ref{fig:family-nbw-ryc-smaller}:
we can represent $\trans(\inits, 00)/_{\preceq_{00}}$ as an ordered sequence of sets $\langle \setnocond{q_{0}}, \setnocond{q_{-1}}\rangle$ since we have $\setnocond{q_{0}} \classpre_{00} \setnocond{q_{-1}}$.
Analogously, $\trans(\inits, 000)/_{\preceq_{000}}$ can also be represented as $\langle \setnocond{q_{0}}, \setnocond{q_{-1}}\rangle$ while $\trans(\inits, 001)/_{\preceq_{001}}$ as $\langle \setnocond{q_{-1}}\rangle$.
We can see that $00 \canoEq^{o} 000$ since $\trans(\inits, 00)/_{\preceq_{00}} = \trans(\inits, 000)/_{\preceq_{000}} = \langle \setnocond{q_{0}}, \setnocond{q_{-1}} \rangle$ while $000 \not\canoEq^{o} 001$ as $\trans(\inits, 001)/_{\preceq_{001}} = \langle \setnocond{q_{-1}} \rangle$.

Since each equivalence class $\oclass{u}$, $u \in \finwords$, can be uniquely encoded as the set $\trans(\inits, u)/_{\preceq_{u}}$, i.e., an ordered sequence of sets over $\states$,
by~\cite{Fogarty13} we have that the number of possible ordered sequences of sets over $\states$ is $\bigO((\frac{n}{e \ln n})^{n}) \approx (0.53n)^{n} \leq n^{n}$.
Thus we have the following upper bound for $\canoEq^{o}$, so it is of finite index.

\begin{restatable}{lemma}{sizeOfOptCanoEq}
\label{lem:size-opt-canoEq}
Let $\canoEq^{o}$ be the right congruence in Definition~\ref{def:opt-canoEq}.
Then $\size{\canoEq^{o}} \leq n^{n}$.
\end{restatable}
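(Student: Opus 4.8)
The plan is to read the lemma as a corollary of the fact that every $\canoEq^o$-class carries a faithful name in the finite set $\mathcal{R}$ of preordered subsets of $\states$, combined with the known count of such preordered subsets. First I would verify that $\canoEq^o$ really is a right congruence of finite index. Because $\phi$ was extended to finite words by composition, $\phi(\inits/_{\preceq}, uv) = \phi(\phi(\inits/_{\preceq}, u), v)$ for all $u, v \in \finwords$; hence $u_{1} \canoEq^o u_{2}$, which by Definition~\ref{def:opt-canoEq} means $\phi(\inits/_{\preceq}, u_{1}) = \phi(\inits/_{\preceq}, u_{2})$, at once gives $\phi(\inits/_{\preceq}, u_{1}v) = \phi(\inits/_{\preceq}, u_{2}v)$, i.e.\ $u_{1}v \canoEq^o u_{2}v$. (The one point worth double-checking here is that $\phi$ is a well-defined deterministic transition function on preordered subsets, which is guaranteed by Definition~\ref{def:partial-order}.) Since $\phi(\inits/_{\preceq}, u) \in \mathcal{R}$ for every $u$, the index of $\canoEq^o$ is finite.

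Next I would make the encoding explicit. By Definition~\ref{def:opt-canoEq}, the map $\oclass{u} \mapsto \phi(\inits/_{\preceq}, u)$ is well defined and injective: two words lie in the same $\canoEq^o$-class exactly when they are sent to the same ordered set. Therefore $\size{\canoEq^o}$ equals the number of distinct ordered sets of the form $\phi(\inits/_{\preceq}, u)$ with $u \in \finwords$, which is at most $\size{\mathcal{R}}$.

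It then remains to bound $\size{\mathcal{R}}$: a preordered subset of $\states$ is a nonempty subset $S \subseteq \states$ together with a total preorder on $S$ (equivalently, an ordered partition of $S$ into nonempty blocks), and the number of these over an $n$-element ground set is $\bigO((\frac{n}{e \ln n})^n) \leq n^n$, as recorded in~\cite{Fogarty13}. Chaining the two bounds gives $\size{\canoEq^o} \leq \size{\mathcal{R}} \leq n^n$, as claimed. There is no real obstacle in this argument: the substantive content is the combinatorial estimate on the number of preordered subsets, and that is available off the shelf from~\cite{Fogarty13}; everything else is a routine composition/determinism argument for the transition function $\phi$.
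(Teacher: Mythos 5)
Your proposal is correct and follows essentially the same route as the paper: the paper likewise notes that each class $\oclass{u}$ is uniquely encoded by the preordered subset $\phi(\inits/_{\preceq}, u)$ and then invokes the bound from \cite{Fogarty13} on the number of preordered subsets of $\states$, namely $\bigO((\frac{n}{e\ln n})^n)\leq n^n$. Your additional verification that $\canoEq^o$ is a right congruence via the compositionality of $\phi$ is a harmless (and sound) elaboration, not a different argument.
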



Given their definitions, it is clear that $\icanoEq$ is coarser than $\canoEq^{o}$, thus $\size{\canoEq^{i}} \leq \size{\canoEq^{o}}$.
Nonetheless, the right congruence $\canoEq^{o}$ allows us to define a novel right congruence relation $\proEq^{o}_{u}$ of index $2^{\bigO(n \log n)}$, for a given $u \in \finwords$.

\begin{definition}[RC $\proEq^{o}_{u}$]
\label{def:opt-proEq}
Given $u, v_{1},v_{2} \in \finwords$, we say $v_{1} \proEq^{o}_{u} v_{2}$ if and only if (1) $uv_{1} \canoEq^{o} uv_{2}$, and (2) for all states $q \in P'$, for $S_{1} = \max_{\preceq_{u}}\setcond{[p]_{\preceq_{u}} \in P/_{\preceq_{u}}}{p \pathto{v_{1}}{}q} $ and $S_{2} = \max_{\preceq_{u}}\setcond{[p]_{\preceq_{u}} \in P/_{\preceq_{u}}}{p \pathto{v_{2}}{}q}$, we have (i) $S_{1} = S_{2}$ and (ii) $p_{1} \fpathto{v_{1}}{}q$ for $p_{1} \in \max(\Pi_{uv_{1},q})|_{u}$ iff $ p_{2} \fpathto{v_{2}}{}q$ for some $p_{2} \in \max(\Pi_{uv_{2},q})|_{u}$ where $P = \trans(\inits, u)$ and $P' = \trans(\inits, uv_{1}) = \trans(\inits, uv_{2})$.
\end{definition}
Note that the equality $\trans(\inits, uv_{1}) = \trans(\inits, uv_{2})$ holds because, under the assumption $uv_{1} \canoEq^{o} uv_{2}$, by definition of $\canoEq^{o}$, we have that the sets of equivalence classes $\trans(\inits, uv_{1})/_{\preceq_{u_{1}}}$ and $\trans(\inits, uv_{2})/_{\preceq_{u_{2}}}$ are equal, so must be $\trans(\inits, uv_{1})$ and $\trans(\inits, uv_{2})$.

Definition~\ref{def:opt-proEq} formalizes the following idea for recognizing the $\omega$-words accepted and rejected by $\A$.
Since we want to use $(\canoEq^{o}, \cup_{u \in \finwords} \setnocond{\proEq^{o}_{u}})$ to characterize $\lang{\A}$ and $\infwords \setminus \lang{\A}$, i.e., to establish its saturation lemma in line with $\canoEq$ (cf.~Lemma~\ref{lem:saturation-congruence}) and $(\icanoEq, \cup_{u \in \finwords} \setnocond{\proEq_{u}})$ (cf.~Lemma~\ref{lem:saturation-impoved-fdfws}), under the assumption that $uv_{1} \canoEq^{o} u $ and $u \canoEq^{o} uv_{2}$, we need to guarantee that if $v_{1} \proEq^{o}_{u} v_{2}$, then $uv_{1}^{\omega} \in \lang{\A} $ if and only if $uv_{2}^{\omega} \in \lang{\A}$.
To achieve this, the first condition we impose (cf. (1) of Definition~\ref{def:opt-proEq}) is to visit infinitely often the same states over the $\omega$-words $uv^{\omega}_{1}$ and $uv^{\omega}_{2}$;
so we require $uv_{1} \canoEq^{o} uv_{2}$.
The second condition is to guarantee that the maximal runs over $uv_{1}^{k}$ and $uv_{2}^{k}$, $k \geq 1$, share the same image under $\dirac$;
so when extending to infinite words, the image under $\dirac$ will still be the same.
This ensures that $uv_{1}^{\omega} \in \lang{\A}$ if and only if $uv_{2}^{\omega} \in \lang{\A}$.
To guarantee to have the same image, we first require that the maximal equivalence classes from each state $q \in \trans(\inits, u)$ over both finite words $v_{1}$ and $v_{2}$ have to be the same (cf. condition (2)-(i), together with Lemma~\ref{lem:subseteq-class-u});
then, we demand that they share the visits to accepting states (cf. condition (2)-(ii) with Lemma~\ref{lem:all-or-none-acc-preorder}).

Consider again Example~\ref{ex:preorder} and let $u = \emptyword, v_{1} = 00$ and $v_{2} = 000$;
we want to check whether $00 \proEq^{o}_{\emptyword} 000$.
Clearly $\emptyword \cdot 00 \canoEq^{o} \emptyword \cdot 000$ since $00 \canoEq^{o} 000$.
$\trans(\inits, \emptyword)/_{\preceq_{\emptyword}}$ can be represented with $\langle \setnocond{q} \rangle$, a singleton;
so obviously, we have $S_{1} = S_{2} = \setnocond{q}$, thus we satisfy Condition (2)-(i) of Definition~\ref{def:opt-proEq}.
To fulfill Condition (2)-(ii), we first have $P' = \setnocond{q_{-1}, q_{0}}$.
We also have $\max(\Pi_{\emptyword \cdot 00, q_{-1}}) = \setnocond{qq_{0}q_{-1}}$, $\max(\Pi_{\emptyword \cdot 00,q_{0}}) = \setnocond{qq_{0}q_{0}}$, $\max(\Pi_{\emptyword \cdot 000, q_{-1}}) = \setnocond{qq_{0}q_{-1}q_{-1}}$ and $\max(\Pi_{\emptyword \cdot 000,q_{0}}) = \setnocond{qq_{0}q_{0}q_{0}}$.
For state $q_{-1} \in P'$, (2)-(ii) is satisfied since $qq_{0}q_{-1}$ and $qq_{0}q_{-1}q_{-1}$ both visit accepting states.
For state $q_{0} \in P'$, (2)-(ii) is also fulfilled as $qq_{0}q_{0}$ and $qq_{0}q_{0}q_{0}$ both visit accepting state $q$.
Therefore, we conclude that $00 \proEq^{o}_{\emptyword} 000$ holds.
Clearly $000 \not\proEq^{o}_{\emptyword} 001$ since we already know that $\emptyword \cdot 000 \not\canoEq^{o} \emptyword \cdot 001$.

As desired before Definition~\ref{def:opt-proEq}, the index of $\proEq^{o}_{u}$ is indeed in $2^{\bigO(n \log n)}$.
\begin{restatable}[]{lemma}{sizeOfOptProEq}
\label{lem:upper-bound-opt-proEq}
Given $u \in \finwords$, let $\proEq^{o}_{u}$ be the right congruence from Definition~\ref{def:opt-proEq}.
Then $\size{\proEq^{o}_{u}} \leq n^{n} \times (n+1)^{n} \times 2^{n} \in 2^{\bigO(n \log n)}$.
\end{restatable}
The upper bound for $\size{\proEq^{o}_{u}}$ derives from the encoding we use for $[v]_{\proEq^{o}_{u}}$.
$[v]_{\proEq^{o}_{u}}$ is mapped to the pair $\langle \trans(\inits, uv)/_{\preceq_{uv}}, f \rangle$ where the function $f$ keeps track of the satisfaction of the states $q \in \states$ of the conditions in Definition~\ref{def:opt-proEq}, i.e., whether $q \in \trans(\inits, uv)$ and Conditions (2)-(i) and (2)-(ii) for the such states.
The codomain of $f$ has size $2n + 1 < 2(n+1)$, so the possible different functions $f$ are $(2(n+1))^{n} = 2^{n} \times (n+1)^{n}$, while by~\cite{Fogarty13} the possible sets $\trans(\inits, uv)/_{\preceq_{uv}}$ are $n^{n}$, hence $\size{\proEq^{o}_{u}} \leq n^{n} \times (n+1)^{n} \times 2^{n} \in 2^{\bigO(n \log n)}$.

Similarly to Lemma~\ref{thm:size-dbw-improved-fdfw}, if we restrict ourselves to DBWs, then $\size{\proEq^{o}_{u}}$ is exponentially better than the bound $2^{\bigO(n \log n)}$ we have for general NBWs.
\begin{restatable}{lemma}{dbwSizeOptFDFW}
\label{lem:size-dbw-opt-fdfw}
Let $\A$ be a DBW with $n$ states.
Then $\Sigma_{\oclass{u} \in \oquotient} \size{\proEq^{o}_{u}} \in \bigO(n^{2})$.
\end{restatable}
This result follows from the fact that, being $\A$ deterministic, then there are at most $n$ classes $\oclass{u} \in \oquotient$;.
By taking the same encoding as in Lemma~\ref{lem:upper-bound-opt-proEq}, this time the index of $\proEq^{o}_{u}$ is at most $2n$, so the result follows.

Similarly to the other (right) congruence relations we considered, i.e., $\canoEq$ (cf.~Lemma~\ref{lem:saturation-congruence}) and $(\icanoEq, \cup_{u \in \finwords} \setnocond{\proEq_{u}})$ (cf.~Lemma~\ref{lem:saturation-impoved-fdfws}), also $(\canoEq^{o}, \cup_{u\in\finwords} \setnocond{\proEq^{o}_{u}})$ enjoys its saturation lemma.
As stated below, $(\canoEq^{o}, \cup_{u \in \finwords} \setnocond{\proEq^{o}_{u}})$ is able to recognize exactly $\lang{\A}$ and $\infwords \setminus \lang{\A}$;
a core property to obtain this is again that the $\omega$-languages $\oclass{u}\oproclass{v}^{\omega}$ are included either in $\lang{\A}$ or in its complement $\infwords \setminus \lang{A}$.

\begin{restatable}[Saturation Lemma for $(\canoEq^{o}, \cup_{u\in\finwords} \setnocond{\proEq^{o}_{u}})$]{lemma}{optSaturationLemma}
\label{lem:saturation-opt-fdfws}
\begin{enumerate}
\item For $u \in \finwords, v \in \poswords$, if $uv\canoEq^{o} u$, then either $\oclass{u}\oproclass{v}^{\omega} \cap \lang{\A} = \emptyset$ or $\oclass{u}\oproclass{v}^{\omega} \subseteq \lang{\A}$.

\item $\infwords = \bigcup \setcond{\oclass{u}\oproclass{v}^{\omega}}{u \in \finwords, v \in \poswords, uv\canoEq^{o} u}$.
\item $\infwords \setminus \lang{\A} = \bigcup \setcond{\oclass{u}\oproclass{v}^{\omega}}{u \in \finwords, v \in \poswords, uv\canoEq^{o} u, \oclass{u}\oproclass{v}^{\omega} \cap \lang{\A} = \emptyset}$.
\end{enumerate}
\end{restatable}
The proof for this saturation lemma follows the same steps as for the other two saturation lemmas, with the appropriate adaptations that take into consideration the differences in the definitions of the right congruences.

\begin{remark}
\label{rmk:comparison}
In their work~\cite{sistla1987complementation}, Sistla \emph{et al.} constructed an NBW $\B_{u,v}$ for each proper language $Y_{u, v} = \class{u}\class{v}^{\omega}$ such that $Y_{u, v} \cap \inflang{\A} = \emptyset$.
Each $\B_{u,v}$ can be constructed with two copies of the DFW $\M[\canoEq]$ induced by $\canoEq$ (cf.~Definition~\ref{def:induced-dfw}) where the first copy processes the finite prefix $u$ while the second copy is modified to accept the word $v^{\omega}$.
According to~\cite{sistla1987complementation}, the resulting NBW $\A^{c}$ has $3^{\bigO(n^{2})}$ states.
Breuers \emph{et al.}~\cite{DBLP:conf/fossacs/BreuersLO12} also proposed a subset construction for improving \rbc for complementing NBWs;
in particular, they used the subset construction to process the finite prefix $u$ of a UP-word $uv^{\omega}$ in $\infwords \setminus \lang{\A}$.
On the other hand, they still used the classical congruence relation $\canoEq$ for recognizing the periodic word $v$ of $uv^{\omega}$.
Differently from the algorithms proposed in~\cite{sistla1987complementation,DBLP:conf/fossacs/BreuersLO12}, we exploit the right congruence $\proEq_{u}$ or $\proEq^{o}_{u}$ instead of the congruence $\canoEq$ for accepting the period $v$ of $uv^{\omega}$;
this can result in a considerable decrease of the index of the relation  (cf.~Theorem~\ref{thm:comparisonCanoEq}), which influences the number of states of the automata we build from these relations.
The part for accepting $v$ in~\cite{DBLP:conf/fossacs/BreuersLO12} has also been optimized with a preorder and its size is also reduced to $2^{\bigO(n \log n)}$.
While leading to the same upper bound, there is a difference on the automata needed to process the period $v$:
for a given $u$, the construction given in~\cite{DBLP:conf/fossacs/BreuersLO12} uses more than one automaton for recognizing $v$;
instead, our approach needs one automaton, because the equivalence class $\oclass{u}$ of $\canoEq^{o}$ only relates with one right congruence relation $\proEq^{o}_{u}$.
This allows us to represent $(\canoEq^{o}, \cup_{u\in\finwords} \setnocond{\proEq^{o}_{u}})$ as an FDFW, as we explain in Section~\ref{sec:applications}.
\end{remark}

\section{Connection to FDFWs}
\label{sec:applications}

In this section, we highlight the deep connection between the congruence relations of NBWs and FDFWs.
This allows us to use the right congruences $(\mathord{\canoEq^{o}}, \bigcup_{u \in \finwords} \setnocond{\mathord{\proEq^{o}_{u}}})$ we introduced in Section~\ref{sec:optimal-rc} to construct an FDFW $\F$ with \emph{optimal} complexity that accepts $\infwords \setminus \lang{\A}$.
As a byproduct of this connection, we are able to prove Theorem~\ref{thm:lower-bound-rc};
in other words, one cannot find congruence relations for NBWs of index less than $2^{\bigO(n \log n)}$.

We now introduce the construction of FDFWs from the right congruences.
Since $\canoEq^{o}$ (resp., $\icanoEq$) and $\proEq^{o}_{u}$ (resp., $\proEq_{u}$) with $u \in \finwords$ are right congruences of finite index, by means of Definition~\ref{def:induced-dfw} they can be used to define the transition structures of the DFWs of an FDFW $\F$ recognizing $\infwords \setminus \lang{\A}$.
Moreover, by Lemma~\ref{lem:saturation-opt-fdfws} (resp., Lemma~\ref{lem:saturation-impoved-fdfws}), we can identify the accepting macrostates of the progress DFWs.
We now give the construction of the FDFW $\F$ with $\canoEq^{o}$ and $\proEq^{o}_{u}$.
The construction of the FDFW with $\icanoEq$ and $\proEq_{u}$ is similar.

\begin{definition}
\label{def:improved-ramsey-fdfws}
The FDFW $\F$ is a tuple $(\M[\canoEq^{o}], \setnocond{\N_{u}[\proEq^{o}_{u}]})$ where
\begin{itemize}
\item
    $\M[\canoEq^{o}]$ is the DFW induced by $\canoEq^{o}$ according to Definition~\ref{def:induced-dfw};
\item
    for each macrostate $\oclass{u}$ of $\M[\canoEq^{o}]$, the progress DFW $\N_{u}[\proEq^{o}_{u}]$ is constructed as in Definition~\ref{def:induced-dfw} parameterized with $\proEq^{o}_{u}$.
    The accepting macrostates of $\N_{u}[\proEq^{o}_{u}]$ are the equivalence classes $\oproclass{v}$ of $\proEq^{o}_{u}$ such that $uv \canoEq^{o} u$ and $\oclass{u} \oproclass{v}^{\omega} \cap \lang{\A} = \emptyset$.
\end{itemize}
\end{definition}

The FDFW constructed according to Definition~\ref{def:improved-ramsey-fdfws} has the desired properties we are looking for:
it accepts $\infwords \setminus \lang{\A}$ and has only $2^{\bigO(n \log n)}$ macrostates.
\begin{restatable}{theorem}{correctnessOfImprovedFDFW}
\label{thm:recurrent-fdfw-neg}
Let $\F$ be the FDFW constructed from $\A$ in Definition~\ref{def:improved-ramsey-fdfws}.
Then
(1) $\upword{\F} = \upword{\infwords \setminus \inflang{\A}}$;
(2) $\F$ is saturated; and
(3) $\F$ has $2^{\bigO(n \log n)}$ macrostates.
\end{restatable}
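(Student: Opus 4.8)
The plan is to prove items (3), (1) and (2) in that order; (3) is a counting argument, (1) is the heart of the theorem and relies on the saturation lemma (Lemma~\ref{lem:saturation-opt-fdfws}), and (2) falls out of (1). For item (3) I would combine the bounds already proved: Lemma~\ref{lem:size-opt-canoEq} bounds the leading DFW $\M[\canoEq^o]$ by $n^n$ macrostates, and Lemma~\ref{lem:upper-bound-opt-proEq} bounds each progress DFW $\N_{u}[\proEq^o_{u}]$ by $n^n \cdot (n+1)^n$ macrostates; since there is exactly one progress DFW per macrostate of $\M[\canoEq^o]$, the total is at most $n^n + n^n \cdot n^n \cdot (n+1)^n \in 2^{\bigO(n \log n)}$.

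For item (1) I would first translate FDFW acceptance (Definition~\ref{def:acc-fdfa}) into congruence terms: since $\M[\canoEq^o]$ reaches the macrostate $\oclass{u}$ on input $u$ and $\N_{u}[\proEq^o_{u}]$ reaches $\oproclass{v}$ on input $v$, a decomposition $(u,v)$ of a UP-word $w = uv^{\omega}$ is normalized iff $u \canoEq^o uv$, and it is captured iff $\oproclass{v}$ is an accepting macrostate, which means there is a word $v'$ with $v \proEq^o_{u} v'$, $uv' \canoEq^o u$ and $u{v'}^{\omega} \notin \lang{\A}$. For $\upword{\F} \subseteq \upword{\infwords \setminus \lang{\A}}$, given $w = uv^{\omega}$ accepted via a normalized and captured $(u,v)$, I would pick such a witness $v'$, use $uv' \canoEq^o u$ to apply Lemma~\ref{lem:saturation-opt-fdfws}(1) to $\oclass{u}\oproclass{v'}^{\omega}$, and conclude from $u{v'}^{\omega} \notin \lang{\A}$ that $\oclass{u}\oproclass{v'}^{\omega} \cap \lang{\A} = \emptyset$; since $v \proEq^o_{u} v'$ gives $v \in \oproclass{v'}$, hence $w = uv^{\omega} \in \oclass{u}\oproclass{v'}^{\omega}$, we get $w \notin \lang{\A}$, so $w \in \upword{\infwords \setminus \lang{\A}}$. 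For the reverse inclusion, given any UP-word $w = u_0 v_0^{\omega} \notin \lang{\A}$, I would use the finiteness and determinism of $\M[\canoEq^o]$ to find $h \geq 0$, $k \geq 1$ with $\M(u_0 v_0^{h}) = \M(u_0 v_0^{h+k})$, set $u = u_0 v_0^{h}$ and $v = v_0^{k}$ (so $(u,v)$ is normalized and $uv^{\omega} = w \notin \lang{\A}$), and observe that, taking $v' = v$ above, $\oproclass{v}$ is an accepting macrostate, so $(u,v)$ is captured and $w \in \upword{\F}$.

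For item (2) I would take $w \in \upword{\F}$ with two normalized decompositions $(u,v)$ and $(u',v')$ where $(u,v)$ is accepted; by item (1) we have $w = uv^{\omega} \notin \lang{\A}$, hence $u'{v'}^{\omega} = w \notin \lang{\A}$ too, and since $(u',v')$ is normalized ($u' \canoEq^o u'v'$), taking $v'$ itself as the witness shows $[v']_{\proEq^o_{u'}}$ is accepting, so $(u',v')$ is captured, hence accepted; contraposition then yields saturation. The main obstacle is not a single hard step but the bookkeeping in item (1): carefully matching the FDFW acceptance condition with the congruence statements and, in the forward inclusion, using the saturation lemma to transfer ``$u{v'}^{\omega} \notin \lang{\A}$'' for the witness $v'$ over to ``$uv^{\omega} \notin \lang{\A}$'' for the actual periodic word $v$, since the definition of accepting macrostate only guarantees that \emph{some} $\proEq^o_{u}$-equivalent $v'$ lies below $\lang{\A}$, not $v$ itself.
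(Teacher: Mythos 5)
Your proposal is correct and follows essentially the same route as the paper: the counting argument for item (3), the saturation lemma (Lemma~\ref{lem:saturation-opt-fdfws}) in both directions for item (1) (with the normalization/pumping of $\M$ for the reverse inclusion, which is exactly how the paper's Lemma~\ref{lem:saturation-opt-fdfws}(2)--(3) are established), and deriving item (2) from item (1) together with the definition of accepting macrostates. In fact, your explicit use of the saturation lemma to transfer ``$u{v'}^{\omega}\notin\lang{\A}$'' from the class witness $v'$ to the actual periodic word $v$ spells out a step the paper compresses into ``by Definition~\ref{def:improved-ramsey-fdfws}.''
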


The three results stated in Theorem~\ref{thm:recurrent-fdfw-neg} follow by the definition of $\F$ and the properties of $(\mathord{\canoEq^{o}}, \bigcup_{u \in \finwords} \setnocond{\mathord{\proEq^{o}_{u}}})$:
result (1) is a direct consequence of Definition~\ref{def:improved-ramsey-fdfws};
result (2) is implied by the saturation lemma for $(\mathord{\canoEq^{o}}, \bigcup_{u \in \finwords} \setnocond{\mathord{\proEq^{o}_{u}}})$ (cf. Item (1) of~Lemma~\ref{lem:saturation-opt-fdfws});
and result (3) by the indexes of $(\mathord{\canoEq^{o}}, \bigcup_{u \in \finwords} \setnocond{\mathord{\proEq^{o}_{u}}})$ and the construction of the DFWs of $\F$.

We are now able to formalize the optimality of our FDFW construction of $\F$ based on $(\mathord{\canoEq^{o}}, \bigcup_{u \in \finwords} \setnocond{\mathord{\proEq^{o}_{u}}})$.
The upper bound is due to Theorem~\ref{thm:recurrent-fdfw-neg};
the matching lower bound comes from the well-known fact~\cite{Yan/08/lowerComplexity} that there exists a family of NBWs $\setnocond{\A_{n}}_{n \in \naturals}$ whose complementary NBWs $\setnocond{\A_{n}^{c}}_{n \in \naturals}$ have $2^{\Omega(n\log n)}$ states, so the same lower bound must hold for FDFWs since there are polynomial-time translations from FDFWs to NBWs~(cf.~Lemma~\ref{lem:fdfw-to-nbw}).

\begin{theorem}
\label{thm:lower-bound}
The construction of FDFWs in Definition~\ref{def:improved-ramsey-fdfws} with the right congruence relations $(\mathord{\canoEq^{o}}, \bigcup_{u \in \finwords} \setnocond{\mathord{\proEq^{o}_{u}}})$ from $\A$ is asymptotically optimal.
\end{theorem}

\begin{remark}
Here we discuss related works on FDFWs.
As mentioned before, there are polynomial-time translations from FDFWs to NBWs~\cite{calbrix1993ultimately, AngluinBF18}.
The opposite translation is more challenging:
the direct translations from an $n$-states NBW, proposed in~\cite{calbrix1993ultimately} and in~\cite{Kuperberg19dlt}, produce an FDFW with $\bigO(4^{n^{2} + n})$ states and an FDFW with $\bigO(3^{n^{2} + n})$ states, respectively.
Our construction in Definition~\ref{def:improved-ramsey-fdfws} replaced with $(\canoEq^{i}, \bigcup_{u \in \finwords} \setnocond{\proEq_{u}})$ can even be exponentially better than these two translations;
due to lack of space, detailed reasoning can be found in Appendix~\ref{app:comparison}.
The translation based on an intermediate determinization of NBWs to deterministic Parity automata given in~\cite{AngluinBF18} also yields an FDFW with the optimal complexity $2^{\bigO(n \log n)}$.
Our construction (cf.~Definition~\ref{def:improved-ramsey-fdfws}), however, is the \emph{first direct} and \emph{optimal} translation from an NBW to an FDFW without involving determinization of NBWs.
Like in~\cite{Kuperberg19dlt}, our congruence relation-based translation also reveals that FDFWs are actually being applied in inclusion checking of NBWs (cf. \cite{DBLP:conf/tacas/FogartyV10, DBLP:conf/cav/AbdullaCCHHMV10,DBLP:conf/concur/AbdullaCCHHMV11}).
The specialized translation for DBWs proposed in~\cite{AngluinBF18} can be used to convert a DBW $\A$ with $n$ states to a saturated FDFW $\F'$ with $n + n \times 2n \in \bigO(n^{2})$ macrostates such that $\upword{\F'} = \upword{\infwords\setminus\lang{\A}}$;
we remark that our construction for FDFWs in Sect.~\ref{sec:applications} degenerates to their construction when the given NBW $\A$ is deterministic.
Given an $\omega$-regular language $L$, Angluin and Fisman~\cite{AngluinF16} directly operate on the language $L$ and give congruence relations for constructing FDFWs of $L$.
In contrast, our work takes an NBW $\A$ as input and defines congruence relations for recognizing $\infwords \setminus \lang{\A}$ based on the transitions of $\A$.
\end{remark}

We now formalize the main result of this paper as Theorem~\ref{thm:lower-bound-rc}, that is a direct consequence of Theorem~\ref{thm:lower-bound} since the constructed FDFW has the same number of macrostates as the index the congruence relations $(\mathord{\canoEq^{o}}, \bigcup_{u \in \finwords} \setnocond{\mathord{\proEq^{o}_{u}}})$ (cf.~Definition~\ref{def:improved-ramsey-fdfws}).

\begin{restatable}{theorem}{lowerBoundOptRc}
\label{thm:lower-bound-rc}
The right congruence relations $(\canoEq^{o}, \bigcup_{u \in \finwords}\setnocond{\proEq^{o}_{u}})$ given in Definitions~\ref{def:opt-canoEq} and~\ref{def:opt-proEq}, respectively, are asymptotically optimal among all right congruence relations $(\canoEq, \bigcup_{u \in \finwords}\setnocond{\proEq_{u}})$ such that for each $u \in \finwords$ and $v \in \poswords$, if $uv \canoEq u$, then either $\class{u}\proclass{v}^{\omega} \cap \lang{\A} = \emptyset$ or $\class{u}\proclass{v}^{\omega} \subseteq \lang{\A}$.
\end{restatable}

\section{Concluding Remarks}
\label{sec:conclucion}

In this work, we have proposed coarser congruence relations than the classical congruence relation and further given asymptotically \emph{optimal} right congruences for NBWs.
Moreover, to the best of our knowledge, we give the \emph{first direct} translation from an NBW to an FDFW with \emph{optimal} complexity, based on the optimal right congruences for NBWs.
We showed that congruence relations relate tightly the classical \rbc and FDFWs.
Congruence relations are known to be able to yield the minimal DFWs for given regular languages by the Myhill-Nerode Theorem, by identifying equivalent states.
We conjecture that the resulting congruence relations
above may enable the reduction of state space in the complementary automaton of $\A$.
That is, similar subsumption and simulation techniques developed in~\cite{DBLP:conf/cav/AbdullaCCHHMV10, DBLP:conf/concur/AbdullaCCHHMV11} for the classical congruence relation can also be exploited to avoid exploration of redundant states in the containment checking between NBWs when dealing with the right congruences proposed in this work.
We leave this conjecture to future work.

\bibliographystyle{splncs04}
\bibliography{paper}

\clearpage
\appendix

\section{Proofs}
\label{app:proofs}

\sizeOfcanoEq*
\begin{proof}
An equivalence class $\class{u}$ can be encoded as a set $S \subseteq \R = \states \times \setnocond{0, 1} \times \states$ of labelled pairs of states.
Intuitively, a tuple $(q, 0, r) \in \R$ indicates that $q \pathto{u}{\trans}r$, while $(q, 1, r)$ means $q \fpathto{u}{\trans} r$.
In particular, the equivalence class $\class{\emptyword}$ is encoded as the set $\setcond{(q, 0, q)}{q \in \states} \cup \setcond{(q, 1, q)}{q \in \acc}$.
Assume that $u_{1} \canoEq u_{2}$.
Let $S_{1} = \setcond{(q, 0, r) \in \R}{q \pathto{u_{1}}{\trans} r, q \in \states, r \in \states} \cup \setcond{(q, 1, r) \in \R}{q \fpathto{u_{1}}{\trans} r, q \in \states, r \in \states}$ and $S_{2} = \setcond{(q, 0, r) \in \R}{q \pathto{u_{2}}{\trans} r, q \in \states, r \in \states} \cup \setcond{(q, 1, r) \in \R}{q \fpathto{u_{2}}{\trans} r, q \in \states, r \in \states}$.
It is easy to verify that $S_{1} = S_{2}$.
Thus an equivalence class $\class{u}$ can be uniquely represented as a set $S \subseteq \R$.

Note the set $\R$ is finite, therefore the congruence relation $\canoEq$ is of finite index.
We thus can assume that $m = \size{\canoEq} = \size{\quotient}$.
Since there can be multiple valid pairs of states for a word $u \in \finwords$, the class  $\class{u}$ corresponds to a subset of $\R$.
There are at most $n \times n$ distinct pairs of states and for each pair $(q, r)$, we either have both $(q, 1, r) \in S$ and $(q, 0, r) \in S$, just $(q, 0, r) \in S$ or the pair is not present in $S$.
Thus we have $m = \size{\canoEq} = \size{\quotient} \leq 3^{n^{2}}$.
\end{proof}

\sizeOfCongr*
\begin{proof}
We can obtain the DBW $\C_{n}$ from the NBW $\B_{n}$ in Fig.~\ref{fig:family-nbw-ryc-smaller} by removing the accepting state $q_{-1}$ and making $q_{0}$ a sink nonaccepting state;
the DBWs $\C_{n}$ are the same automata considered in \cite[Theorem 2]{AngluinF16}.
Clearly, the resulting $\C_{n}$ is a DBW.

We borrow the proof for Theorem~\ref{thm:comparisonCanoEq} below;
we note that the original proof of \cite[Theorem 2]{AngluinF16} cannot directly be used to prove our statement.
To prove that $\size{\canoEq} \geq n!$, one can just show that for each pair of different permutation words $u = i_{1}, \cdots, i_{n}$ and $u' = i'_{1}, \cdots, i'_{n}$ of $\setnocond{1, \cdots, n}$, $u \not\canoEq u'$.
We denote by $k$ the smallest position such that $i_{k} \neq i'_{k}$ where $1 \leq k < n$.
If $k = 1$, i.e., $i_{1} \neq i'_{1}$, then $u \not\canoEq u'$ by Definition~\ref{def:congruence-rel}, since we have $q \pathto{u}{} q_{i_{1}}$ and $q \pathto{u'}{} q_{i'_{1}}$ with $q_{i_{1}} \neq q_{i'_{1}}$.
Note that $\trans_{n}$ is deterministic on words without the letter $0$.
Thus $q$ transitions to $q_{i_{1}}$ after reading $i_{1}$ and stays there for the remaining $i_{2}, \cdots, i_{n}$, i.e., $q \pathto{u}{} q_{i_{1}}$.
Similarly, we have $q \pathto{u'}{} q_{i'_{1}}$.
When $1 < k < n$, we have $i_{k-1} = i'_{k-1}$.
Analogously, $u \not\canoEq u'$ since we have $q_{i_{k-1}} \pathto{u}{} q_{i_{k}}$ while $q_{i_{k-1}} \pathto{u'}{} q_{i'_{k}}$ with $i_{k} \neq i'_{k}$.
Therefore, we conclude that $u \not\canoEq u'$.
Thus, the number of equivalence classes of $\canoEq$ is at least $n!$.
\end{proof}

\subsection{Proofs in Section~\ref{sec:improved-ramsey-for-fdfws}}
\label{app:improved-rc}

\sizeOfCanoEq*
\begin{proof}
One can uniquely encode each equivalence class $\iclass{u}$ as the set of states $\trans(\inits, u)$.
Therefore, the number of equivalence classes of $\icanoEq$ is at most $2^{n}$.
\end{proof}

\sizeOfproEq*
\begin{proof}

Similarly to Lemma~\ref{lem:size-of-canoEq-old}, we can uniquely encode an equivalence class $\proclass{v}$ as a set $S \subseteq \R = (\trans(\inits, u) \times \setnocond{0, 1} \times \states)$.
Thus, $\size{\proEq_{u}} \leq 3^{n^{2}}$.
\end{proof}

\compactnessOfImprovedRc*

\begin{proof}
Assume that $v_{1} \canoEq v_{2}$.
For each pair of states $q \in \trans(\inits, u)$ and $r \in \states$, if $q \pathto{v_{1}}{\trans} r$, we also have $q \pathto{v_{2}}{\trans} r$ by Definition~\ref{def:congruence-rel}, since $v_{1} \canoEq v_{2}$.
Analogously, $q \fpathto{v_{1}}{\trans} r$ implies that $q \fpathto{v_{2}}{\trans} r$.
Similarly, we can prove that $u_{1} \canoEq u_{2}$ implies that $u_{1} \icanoEq u_{2}$ for all $u_{1}, u_{2} \in \finwords$.
\end{proof}

\comparisonCanoEq*
\begin{proof}
The family of NBWs $\B_{n}$, inspired from~\cite{AngluinF16}, is depicted in Fig.~\ref{fig:family-nbw-ryc-smaller} with $n+3$ states.
Let $\trans_{n}$ and $\states$ be the transition function and the set of states of $\B_{n}$, respectively.
We can see from Fig.~\ref{fig:family-nbw-ryc-smaller} that the initial state is $q$ and $\acc = \setnocond{q, q_{-1}}$.

We first show that for each $u \in \finwords$, $\size{\iclass{u}} \leq (n+3) + 1$, which also indicates that the FDFW $\F$ constructed from $\B_{n}$ has $\bigO(n^{2})$ states.
By Definition~\ref{def:right-congruence}, each equivalence class $\iclass{u} $ can be encoded as $\trans(q, u)$.
Therefore, the number of equivalence classes of $\icanoEq$ is $n+4$, namely $\setnocond{q_{-1}}, \setnocond{q_{0}}, \setnocond{q_{1}}, \cdots, \setnocond{q_{n}}, \setnocond{q}, \setnocond{q_{0}, q_{-1}}$.
Next we show that for an equivalence class $\iclass{u}$, the number of equivalence classes of $\proEq_{u}$ is at most $(n+3) + 1$.

Let $v \in \poswords$.
For an equivalence class, say $\iclass{u} = \setnocond{q_{i}}$ where $q_{i} \in \states \setminus \setnocond{q_{-1}, q_{0}, q}$ for some $1 \leq i \leq n$, then $\B_{n}$ will reach a state $r$ from $q_{i}$ over $v$ with $r \in \setnocond{q, q_{0}, q_{1}, \cdots , q_{n}}$ or $\B_{n}$ will reach the set $\setnocond{q_{0}, q_{-1}}$.
We may also have $q_{i} \fpathto{v}{\trans} r$ if the path visits $q$.
Therefore, the class $\proclass{v}$ can be encoded as one of the following:
$\setnocond{q_{i} \pathto{v}{} q_{i}}$ (without visiting $q$), $\setnocond{q_{i} \pathto{v}{} q, q_{i} \fpathto{v}{} q}$, $\setnocond{q_{i} \pathto{v}{} q_{j}, q_{i} \fpathto{v}{} q_{j}}$ for all $0 \leq j \leq n, \setnocond{q_{i}\pathto{v}{}q_{-1}, q_{i} \fpathto{v}{} q_{-1}}$ and $\setnocond{q_{i} \pathto{v}{} q_{0}, q_{i} \fpathto{v}{} q_{0}, q_{i} \pathto{v}{} q_{-1}, q_{i} \fpathto{v}{} q_{-1}}$.
It follows that for the equivalence class $\iclass{u} = \setnocond{q_{i}}$ where $q_{i} \in \states \setminus \setnocond{q_{-1}, q, q_{0}}$ for some $1 \leq i \leq n$, the number of possible classes $\proclass{v}$ is at most $(n+3) + 2$.

For $\iclass{u} = \setnocond{q}$, similarly, the equivalence class $\proclass{v}$ can be encoded as one of the following:
$\setnocond{q \pathto{v}{} q, q \fpathto{v}{} q}$, $\setnocond{q \pathto{v}{} q_{j}, q \fpathto{v}{} q_{j}}$ for all $0 \leq j \leq n, \setnocond{q\pathto{v}{}q_{-1}, q \fpathto{v}{} q_{-1}}$ and $\setnocond{q \pathto{v}{} q_{0}, q \fpathto{v}{} q_{0}, q \pathto{v}{} q_{-1}, q \fpathto{v}{} q_{-1}}$.
In total, the number of possible $\proclass{v}$ is at most $n+3 \leq (n+3) + 2$.

For $\iclass{u} = \setnocond{q_{0}}$, the class $\proclass{v}$ can be encoded as either $\setnocond{q_{0} \pathto{v}{} q_{0}}$, $\setnocond{q_{0}\pathto{v}{}q_{-1}, q_{0} \fpathto{v}{} q_{-1}}$ or $\setnocond{q_{0} \pathto{v}{} q_{0}, q_{0} \pathto{v}{} q_{-1}, q_{0} \fpathto{v}{} q_{-1}}$.
So, the number of possible $\proclass{v}$ is at most $3$.
Similarly, for $\iclass{u} = \setnocond{q_{0}, q_{-1}}$ and $\setnocond{q_{-1}}$, the number of possible $\proclass{v}$ is also at most $3$.

Therefore, we have $\size{\proEq_{u}} \leq (n+3) + 2$ for each given $u \in \finwords$.

Now we show that the number of equivalence classes of $\canoEq$ in Definition~\ref{def:congruence-rel} is at least $n!$.

\emph{Short version of the proof.}
To prove that $\size{\canoEq} \geq n!$, one can just show that for each pair of different permutation words $u = i_{1}, \cdots, i_{n}$ and $u' = i'_{1}, \cdots, i'_{n}$ of $\setnocond{1, \cdots, n}$, $u \not\canoEq u'$.
We denote by $k$ the smallest position such that $i_{k} \neq i'_{k}$ where $1 \leq k < n$.
If $k = 1$, i.e., $i_{1} \neq i'_{1}$, then $u \not\canoEq u'$ by Definition~\ref{def:congruence-rel}, since we have $q \pathto{u}{} q_{i_{1}}$ and $q \pathto{u'}{} q_{i'_{1}}$ with $q_{i_{1}} \neq q_{i'_{1}}$.
Note that $\trans_{n}$ is deterministic on words without the letter $0$.
Thus $q$ transitions to $q_{i_{1}}$ after reading $i_{1}$ and stays there for the remaining $i_{2}, \cdots, i_{n}$, i.e., $q \pathto{u}{} q_{i_{1}}$.
Similarly, we have $q \pathto{u'}{} q_{i'_{1}}$.
When $1 < k < n$, we have $i_{k-1} = i'_{k-1}$.
Analogously, $u \not\canoEq u'$ since we have $q_{i_{k-1}} \pathto{u}{} q_{i_{k}}$ while $q_{i_{k-1}} \pathto{u'}{} q_{i'_{k}}$ with $i_{k} \neq i'_{k}$.
Therefore, we conclude that $u \not\canoEq u'$.
Thus, the number of equivalence classes of $\canoEq$ is at least $n!$.

\emph{Long version of the proof.}
Consider the NBW $\B_{n}$ depicted in Fig.~\ref{fig:family-nbw-ryc-smaller}:
to prove the above claim, we can just show that for each pair of different permutation words $u = i_{1}, \cdots, i_{n}$ and $u' = i'_{1}, \cdots, i'_{n}$ of $\setnocond{1, \cdots, n}$, we have that $u \not\canoEq u'$.
We first define an index function $g_{u} \colon \setnocond{1, \cdots, n} \to \setnocond{1, \cdots, n}$ for each $u = i_{1}, i_{2}, \cdots, i_{n}$ such that $g(i_{k}) = k$ for each $k \geq 1$.
That is, $g_{u}$ obtains the index $k$ of a letter $i_{k}$ in $u$.
Obviously, $g_{u} \neq g_{u'}$ if $u \neq u'$.
For instance, when $u = 1, 2, \cdots, n$, we have $g_{u}(k) = k$ for each $k \geq 1$.
Now we define another function $f_{u} \colon \setnocond{q, q_{1}, \cdots, q_{n}} \to \setnocond{q, q_{1}, \cdots, q_{n}}$ to represent the reachability between states over the word $u$.
When $u = 1, 2, \cdots, n$, we have $q_{k-1} \pathto{u}{} q_{k}$ for each $2 \leq k \leq n$, $q \pathto{u}{} q_{1}$, $q_{0} \pathto{u}{} q_{-1}$ and $q_{-1} \pathto{u}{} q_{-1}$.
Here we can just omit $q_{-1}$ and $q_{0}$ since we have only $q_{-1} \pathto{u}{} q_{-1}$ and $q_{0} \pathto{u}{} q_{-1}$ for each permutation word $u$.
Moreover, we can omit the reachability path that visits accepting states, as this does not affect the lower bound of the complexity.
Therefore, for $u = 1, \cdots, n$, we have $f_{u}(q_{k-1}) = q_{k}$ for each $2 \leq k \leq n$, $f_{u}(q_{n}) = q$ and $f_{u}(q) = q_{1}$.
In fact, according to the transition function $\trans_{n}$, the function $f_{u}$ can be defined with $g_{u}$ on the input $u = i_{1}, \cdots, i_{n}$ as follows.
\begin{itemize}
\item
    First, $f_{u}(q) = q_{i_{1}}$. From Fig.~\ref{fig:family-nbw-ryc-smaller}, it is easy to verify that from state $q$, $\B_{n}$ first goes to state $q_{i_{1}}$ after inputting $i_{1}$ and then stays there for the remaining letters.
\item
    Second, $f_{u}(q_{j}) = q_{i_{g_{u}(j) + 1}}$ if $g_{u}(j) + 1 \leq n$, otherwise $f_{u}(q_{j}) = q$, as $g_{u}(j) = n$.
    Intuitively, $\B_{n}$ will stay at $q_{j}$ until seeing the letter $j$.
    Once the letter $j$ has been read, $\B_{n}$ moves from $q_{j}$ to $q$ and then continues by reading the letter $k = i_{g_{u}(j) + 1}$ if $g_{u}(j) < n$ and moves to state $q_{k}$ and stays there.
    If $g_{u}(j) = n$, then $\B_{n}$ will stay at $q$.
\end{itemize}
Given two different permutation words $u$ and $u'$ of $\setnocond{1, \cdots, n}$, we denote by $k$ the smallest position such that $i_{k} \neq i'_{k}$ where $1 \leq k < n$.
If $k = 1$, i.e., $i_{1} \neq i'_{1}$, then $f_{u}(q) = q_{i_{1}}$ while $f_{u'}(q) = q_{i'_{1}}$.
Therefore, $u \not\canoEq u'$ since we have $q \pathto{u}{} q_{i_{1}}$ and $q \pathto{u'}{} q_{i'_{1}}$ with $q_{i_{1}} \neq q_{i'_{1}}$.
Note that $\trans_{n}$ is deterministic on words without the letter $0$.

Otherwise we have $i_{k} \neq i'_{k}$ for $1 < k < n$ and $i_{k-1} = i'_{k-1}$.
Then we have $f_{u}(q_{i_{k-1}}) = q_{i_{k}}$ while $f_{u'}(q_{i_{k-1}}) = q_{i'_{k}}$.
Then $u \not\canoEq u'$ since we have $q_{i_{k-1}} \pathto{u}{} q_{i_{k}}$ and $q_{i_{k-1}} \pathto{u'}{} q_{i'_{k}}$ with $i_{k} \neq i'_{k}$.
We then conclude that $u \not\canoEq u'$.
Thus, the number of equivalence classes of $\canoEq$ is at least $n!$, since this is the number of permutation words of $\setnocond{1, \cdots, n}$.
\end{proof}

\dbwSizeImprovedFDFW*
\begin{proof}
Let $\trans$, $s$ and $\states$ be the transition function, the initial state and the set of states of $\A$, respectively.
Since an equivalence class $\iclass{u}$ can be uniquely encoded as $\trans(s, u)$, the number of equivalence classes of $\icanoEq$ is $n$ since $\trans$ is deterministic.
For an equivalence class $\iclass{u}$, an equivalence class $\proclass{v}$ can be uniquely encoded as a set $S$ of the reachability between states in $\setnocond{q} $ and in $\states$ where $q = \trans(s, u)$.
Since $\trans$ is deterministic, we have that $\size{\trans(q, v)} = 1$ and there are at most $2$ pairs of states in $S$.
This is because there is at most one state $r$ in $q \pathto{v}{\trans} r$ once $v$ is given.
Without loss of generality, for an equivalence class $\proclass{v}$, assume that $r = \trans(q, v)$.
Then, we have that $S=\setnocond{q\pathto{v}{} r}$ or $S=\setnocond{q\pathto{v}{} r, q\fpathto{v}{} r}$.
It follows that the number of possible $S$ for the equivalence class $\proclass{v}$ is at most $2$.
Therefore, the number of equivalence classes of $\proEq_{u}$ is $2n$, as the number of possible $q$ is $n$.
Then $\Sigma_{\iclass{u} \in \iquotient} \size{\proEq_{u}} \leq n \times 2n \in \bigO(n^{2})$.
\end{proof}

\saturationLemmaImproved*
\begin{proof}
Consider Item~(1) and two words $u \in \finwords, v \in \poswords$ such that $uv \icanoEq u$.
By Definition~\ref{def:right-congruence}, we have that $\trans(\inits, u) = \trans(\inits, uv)$.
For each word $v' \in \proclass{v}$, states $q \in \trans(\inits, u)$ and $r \in \states$, by Definition~\ref{def:pro-right-congruence} we have that $q \pathto{v}{} r$ implies $q \pathto{v'}{} r$.
Thus, $\trans(\inits, uv') = \trans(\inits, uv) = \trans(\inits, u)$, hence $\iclass{u}\proclass{v} = \iclass{u}$.

If $\iclass{u}\proclass{v}^{\omega} \cap \lang{\A} = \emptyset$, then Item~(1) trivially holds.
Now assume that there exists a word $w \in \iclass{u}\proclass{v}^{\omega} \cap \lang{\A}$.
We can then write $w$ as $w = u_{0} \cdot v_{1} \cdot v_{2} \cdots $ with $u_{0} \in \iclass{u}$ and $v_{i} \in \proclass{v}$ for each $i \geq 1$.
It follows that $\trans(\inits, u_{0}) = \trans(\inits, u_{0}v_{i}) = \trans(\inits, u_{0}v_{1} \cdots v_{i})$ for each $i \geq 1$ since $u_{0}v_{i} \icanoEq u_{0}$ for each $i \geq 1$.
Therefore, the set of states $\trans(\inits, u_{0}) = \trans(\inits, u)$ has been visited infinitely many times and they can be reached by themselves.
Since $w \in \lang{\A}$, there exists an accepting run of $\A$ over $w$ that can be written as $\rho_{w} = q \pathto{u_{0}}{\trans} q_{0} \pathto{v_{1}}{\trans} q_{1} \pathto{v_{2}}{\trans} q_{2} \cdots$ where $q \in \inits, q_{0} \in \trans(q, u_{0})$ and $q_{i} \in \trans(\inits, u_{0}v_{i})$ for each $i\geq 0$.
By Definition~\ref{def:right-congruence}, there is the accepting run $\rho = p \pathto{u}{\trans} q_{0} \pathto{v_{1}}{\trans} q_{1} \pathto{v_{2}}{\trans} q_{2} \cdots$ for some $p \in \inits$ such that $p \pathto{u}{} q_{0}$, since $q_{0} \in \trans(\inits, u)$.
Let $w' = u'_{0} v'_{1} v'_{2} \cdots$ with $ u'_{0} \in \iclass{u}$ and $v'_{i} \in \proclass{v}$ for each $i\geq 1$.
By Definition~\ref{def:pro-right-congruence}, $\rho$ is also an accepting run of $\A$ over $u \cdot v'_{1} \cdot v'_{2}\cdots$, since $q_{i-1} \pathto{v_{i}}{} q_{i}$, as $v_{i} \proEq_{u} v'_{i}$ for each $k \geq 1$ and $q_{i} \in \trans(\inits, u)$ for each $k \geq 0$.
Since $u'_{0} \in \iclass{u}$, there exists a state $q' \in \inits$ such that $q' \pathto{u'_{0}}{} q_{0}$.
It follows that $\rho_{w'} = q' q_{0} q_{1} \cdots$ is a run of $\A$ over $w'$.
Clearly, $\rho_{w'}$ is an accepting run of $\A$.
Therefore, $w'$ is also accepted by $\A$.
It follows that $\iclass{u}\proclass{v}^{\omega} \subseteq \lang{\A}$ if $\iclass{u}\proclass{v}^{\omega} \cap \lang{\A} \neq \emptyset$.

Regarding Item~(2), we are going to show the equality of UP-words for proving the equivalence of the two languages, by means of Theorem~\ref{thm:upword-omega-regular-language}.
Let $w \in \upword{\infwords}$, $(u', v')$ be a decomposition of $w$ and $\M$ be the DFW induced by $\icanoEq$ as by Definition~\ref{def:induced-dfw}.
According to~\cite{AngluinBF18}, there exist $u = u'v'^{h}$ and $v = v'^{k}$, where $h, k \geq 1$ such that $\M(u) = \M(uv)$, i.e., $(u, v)$ is a normalized decomposition of $w$.
Intuitively, since $\M$ is deterministic and has finitely many states, $\M$ visits a state, say $\M(u)$, from $\M(u')$ twice after inputting enough repetitions of $v'$.
Without loss of generality, let $\iclass{u} = \M(u)$.
Since $\M(u) = \M(uv)$, we have that $u \icanoEq uv$.
It follows that we have an $\omega$-language $\iclass{u}\proclass{v}^{\omega}$ with $uv\icanoEq u$ such that $ w \in \iclass{u}\proclass{v}^{\omega}$ for each $\omega$-word $w \in \upword{\infwords}$.
Therefore $\infwords = \bigcup \setcond{\iclass{u}\proclass{v}^{\omega}}{u \in \finwords, v \in \poswords, uv\icanoEq u}$.

Item~(3) is a direct consequence of Items~(1) and (2).
By Item~(2), the decompositions $(u, v)$ with $uv \icanoEq u$ covers the whole set of infinite words;
moreover, according to Item~(1), for each of such decompositions, either $\iclass{u}\proclass{v}^{\omega} \cap \lang{\A} = \emptyset$ or $\iclass{u}\proclass{v}^{\omega} \subseteq \lang{\A}  $.
Therefore, to obtain $\infwords \setminus \lang{\A}$, we only need to take the union of all the $\omega$-regular languages $\iclass{u}\proclass{v}^{\omega}$ such that $\iclass{u}\proclass{v}^{\omega} \cap \lang{\A} = \emptyset$ and $uv \icanoEq u$.
\end{proof}

\subsection{Proofs in Section~\ref{sec:optimal-rc}}
\label{app:opt-rc}

\subsetOfMaxPreds*

\begin{proof}
Let $k = \size{u}$ and $q' \in [q]_{\preceq_{uv}}$.
We first prove that for a run $\pi \in \max(\Pi_{uv, q'})$, we have $\pi[k+1] \simeq_{u} p$.
Since both $\pi[k+1]$ and $p$ belong to $\trans(\inits, u)$, we know that $\pi[k+1]$ and $p$ are comparable under $\preceq_{u}$ according to Definition~\ref{def:preorder-u}.

For each run $\pi \in \max(\Pi_{uv, q'})$, we first prove by contradiction that $p \preceq_{u} \pi[k+1]$.
If $\pi[k+1] \prec_{u} p$, i.e., $p \not\preceq_{u} \pi[k+1] $, then $\pi$ is not in $\max(\Pi_{uv, q'})$, since there exists a run $\pi'$ greater than $\pi$ that can be obtained by extending $\pi_{p} \in \max(\Pi_{u,p})$ with $p \pathto{v}{} q$.
Since $\pi'$ reaches state $q$, it follows that $\pi'$ is greater than $\pi$, which means that $q' \prec_{uv} q$ according to Definition~\ref{def:preorder-u}.
This contradicts $q' \in [q]_{\preceq_{uv}}$.
Thus we have that $p \preceq_{u} \pi[k+1]$ for all $\pi \in \max(\Pi_{uv, q'})$.

Analogously, we can prove that $\pi[k+1] \preceq_{u} p$.
If $p \prec_{u} \pi[k+1]$, i.e., $\pi[k+1] \not\preceq_{u} p$, then $[p]_{\preceq_{u}}$ is not the maximal equivalence class that has a state $p'$ reaching $q$ over $v$.
Therefore, we have $\pi[k+1] \simeq_{u} p$ for each run $\pi \in \max(\Pi_{uv,q'})$.
Since $q' \in [q]_{\preceq_{uv}}$ has been taken arbitrarily, it follows that for each $q' \in [q]_{\preceq_{uv}}$, $\setcond{\pi[\size{u}+1]}{\pi \in \max(\Pi_{uv,q'})} \subseteq [p]_{\preceq_{u}}$.
\end{proof}

\saturationAccPreorder*
\begin{proof}
Recall that $\max(\Pi_{uv, q})|_{u} = \setcond{\pi[\size{u} + 1]}{\pi \in \max(\Pi_{uv, q})}$.

We prove the claim by contradiction.
Assume without loss of generality that $p_{1} \fpathto{v}{} q_{1}$ but not $p_{2} \fpathto{v}{} q_{2}$.
It is obvious that $p_{1} \pathto{v}{} q_{1}$ and $p_{2} \pathto{v}{}q_{2}$ since $p_{1}$ and $p_{2}$ are on the maximal runs to $q_{1}$ and $q_{2}$, respectively.
Let $\pi_{q_{1}} \in \max(\Pi_{uv, q_{1}})$ with $\pi_{q_{1}}[\size{u} + 1] = p_{1}$ and $\pi_{q_{2}} \in \max(\Pi_{uv,q_{2}})$ with $\pi_{q_{2}}[\size{u} + 1] = p_{2}$.
Since, $q_{1} \simeq_{uv} q_{2}$, we know that $\pi_{q_{1}}[i] \in \acc \Longleftrightarrow \pi_{q_{2}}[i]\in \acc$ for all $1 \leq i \leq \size{uv} +1$.
Since $p_{2} \fpathto{v}{} q_{2}$ does not hold, we have $\pi_{q_{2}}[i] \notin \acc$ as well as $\pi_{q_{1}}[i] \notin \acc$ for all $\size{u} < i \leq \size{uv} + 1$.
However, since $p_{1} \fpathto{v}{} q_{1}$, i.e., $p_{1} \pathto{v}{} q_{1}$ and it visits an accepting state, there exists a run $\pi'_{q_{1}}$ greater than $\pi_{q_{1}}$ that can be obtained by extending $\pi_{q_{1}}[1\cdots \size{u}]$ with $p_{1}\fpathto{v}{} q_{1}$, because of the visit to an accepting state between $p_{1}$ and $q_{1}$.
(If $\size{u} = 0$, then $\pi_{q_{1}}[1\cdots \size{u}]$ is empty sequence.)
This contradicts the fact that $\pi_{q_{1}} \in \max(\Pi_{uv,q_{1}})$, by which the claim follows.
\end{proof}

\sizeOfOptCanoEq*
\begin{proof}
For $u \in \finwords$, an equivalence class $\oclass{u}$ can be uniquely encoded as an ordered sequence of sets $\langle S_{1}, S_{2}, \cdots, S_{k} \rangle$, i.e,. $\trans(\inits, u)/_{\preceq_{u}}$ where $\trans(\inits, u) = \cup_{1\leq j\leq k} S_{j}$ and $S_{i}\cap S_{j} = \emptyset$ for each $1 \leq i < j \leq k$ if $k > 1$.
Here we have that $S_{i} \classpre S_{j}$ for each $1 \leq i < j \leq k$ if $k > 1$.
According to~\cite{Fogarty13}, the number of possible ordered sequence of sets over $\states$ is approximately $(0.53n)^{n}$.
Thus we have that $\size{\canoEq^{o}} \leq n^{n}$.
\end{proof}

\sizeOfOptProEq*
\begin{proof}
Given $u$ is fixed, let $v \in \finwords$.
Each equivalence class $[v]_{\proEq^{o}_{u}}$ can be uniquely encoded as a pair $\langle \trans(\inits, uv)/_{\preceq_{uv}}, f \rangle$ where
\begin{itemize}
\item
    $\trans(\inits, uv)/_{\preceq_{uv}}$ is the set of equivalence classes of $\trans(\inits, uv)$ under the preorder $\preceq_{uv}$;
\item
    $f$ is a function mapping a state $q \in \trans(\inits, uv)$ to the maximal equivalence class in $\trans(\inits, u)/_{\preceq_{u}}$ that has a state $p$ reaching $q$ over $v$ (corresponding to (2)-(i) of Definition~\ref{def:opt-proEq}) and a Boolean value $\ell$ that marks whether there is a maximal run $\pi \in \max(\Pi_{uv, q})$ with $\pi[\size{u}+1] \fpathto{v}{} q$  (corresponding to (2)-(ii) of Definition~\ref{def:opt-proEq}) or an empty set that indicates the state $q$ is not present in $\trans(\inits, uv)$.
\end{itemize}
Given two words $v_{1}, v_{2} \in \finwords$, we represent $\oproclass{v_{1}}$ with $\langle \trans(\inits, uv_{1}), f_{1} \rangle$ and $\oproclass{v_{2}}$ with $\langle \trans(\inits, uv_{2}), f_{2} \rangle$.
According to Definition~\ref{def:opt-proEq}, it is easy to see that $v_{1} \proEq^{o}_{u} v_{2}$ iff $\langle \trans(\inits, uv_{1}), f_{1} \rangle = \langle \trans(\inits, uv_{2}), f_{2} \rangle$.

Since $\trans(\inits, u)/_{\preceq_{u}}$ is given and there are at most $n$ equivalence classes in $\trans(\inits, u)/_{\preceq_{u}}$, the number of possible functions $f$ is at most $(2(n+1))^{n} = (n+1)^{n} \times 2^{n}$,
where the factor $2$ comes from the Boolean $\ell$ and the factor $n+1$ from the empty set and the equivalence classes.
It is then clear that there are at most $n^{n} \times (n+1)^{n} \times 2^{n}$ equivalence classes defined by $\proEq^{o}_{u}$ since, by~\cite{Fogarty13}, the number of possible $\trans(\inits, uv)/_{\preceq_{uv}}$ is at most $n^{n}$ regardless of what the word $uv$ is.
\end{proof}

\dbwSizeOptFDFW*
\begin{proof}
Let $\trans$, $s$ and $\states$ be the transition function, the initial state and the set of states of $\A$, respectively.
Recall that an equivalence class $\oclass{u}$ can be uniquely encoded as $\trans(s, u)/_{\preceq_{u}}$.
Since $\trans$ is deterministic, i.e., $\size{\trans(s, u) } = 1$, so there is only one equivalence class in $\trans(s, u)/_{\preceq_{u}}$, i.e., $\setnocond{\trans(s, u)}$.
So the number of equivalence classes of $\canoEq^{o}$ is at most $n$ since each $\setnocond{\trans(s, u)}$ is a singleton.

When we fix a word $u$, we can just represent $u$ as the state $\trans(s, u)$, say $q$.
As mentioned in the proof of Lemma~\ref{lem:upper-bound-opt-proEq}, an equivalence class $\oproclass{v}$ can be uniquely encoded as $\langle \trans(\inits, uv)/_{\preceq_{uv}}, f\rangle$.
Again, since $\trans$ is deterministic, $\trans(\inits, uv)/_{\preceq_{uv}}$ is a singleton, so the number of possible singletons is at most $n$.
Let $r = \trans(\inits, uv) \in \states$.
Recall that $f$ is a function mapping a state $q \in \trans(\inits, uv)$ to the maximal equivalence class in $\trans(\inits, u)/_{\preceq_{u}}$ that has a state $p$ reaching $q$ over $v$ (corresponding to (2)-(i) of Definition~\ref{def:opt-proEq}) and a Boolean value $\ell$ that marks whether there is a maximal run $\pi \in \max(\Pi_{uv, q})$ with $\pi[\size{u}+1] \fpathto{v}{} q$  (corresponding to (2)-(ii) of Definition~\ref{def:opt-proEq}) or an empty set that indicates the state $q$ is not present in $\trans(\inits, uv)$.
Since $\trans$ is deterministic, the maximal equivalence class in $\trans(\inits, u)/_{\preceq}$ is a constant since $u$ is fixed and $\trans(\inits, u)/_{\preceq}$ has to be $\setnocond{q}$.
For the Boolean value $\ell$, there are $2$ possible values.
Thus the number of possible $\oproclass{v}$ is at most $2 \times n$.
Since $\size{\canoEq^{o}} \leq n$, then the claim follows.
\end{proof}

In order to prove Lemma~\ref{lem:saturation-opt-fdfws}, below we show that for each $w \in \oclass{u} \oproclass{v}^{\omega}$, we can find an accepting run $\rho$ whose prefixes $\rho[1\cdots i+1]$ is a maximal run in $\Pi_{w[1\cdots i], \rho[i+1]}$ for all $i \geq 1$.
\begin{restatable}{lemma}{maxRunForACC}
\label{lem:existence-max-run}
Given $u \in \finwords, v \in \poswords$ with $uv \canoEq^{o} u$, let $w = u_{0}v_{1} v_{2} \cdots$ where $u_{0} \in [u]_{\canoEq^{o}}$ and $v_{j} \in \oproclass{v}$ for all $j \geq 1$.
If $w \in \lang{\A}$, then there exists an accepting run $\rho = q \pathto{u_{0}}{} q_{0} \pathto{v_{1}}{} q_{1} \pathto{v_{2}}{}\cdots$ of $\A$ over $w$ where $q \in \inits$ and $[q_{j}]_{\preceq_{u}} = \max_{\preceq_{u}} \setcond{[p]_{\preceq_{u}} \in \trans(\inits, u)/_{\preceq_{u}}}{p \pathto{v_{j+1}}{} q_{j+1}}$ for each $j \geq 0$.
\end{restatable}


\begin{proof}
Among all accepting runs of $\A$ over $w$, we can find a run $\rho$ such that for all $i \geq 1$, there does not exist another accepting run $\rho'$ with $\rho'[1\cdots i]$ being greater than $\rho[1\cdots i]$.
It is easy to see that such an accepting run $\rho$ exists, which we call a \emph{maximal} accepting run.

Since $uv \canoEq^{o} u$, similarly to the proof of Lemma~\ref{lem:saturation-impoved-fdfws}, $\oclass{u}\oproclass{v} = \oclass{u}$.
It follows that $u \canoEq^{o} u_{0}v_{i} \canoEq^{o} u_{0}v_{1}\cdots v_{i}$ for each $i \geq 1$ since $u_{0}v_{i} \canoEq^{o} u_{0} \canoEq^{o} u$ for each $i \geq 1$.
By Definition~\ref{def:opt-canoEq}, we have $\trans(\inits, u)/_{\preceq_{u}} = \trans(\inits, uv_{i})/_{\preceq_{uv_{i}}} = \trans(\inits, uv_{1} \cdots v_{i})/_{\preceq_{uv_{1} \cdots v_{i}}}$ for every $i \geq 1$ since $u \canoEq^{o} u_{0}v_{i} \canoEq^{o} u_{0}v_{1}\cdots v_{i}$ for each $i \geq 1$.
We also have $\trans(\inits, u) = \trans(\inits, uv_{i}) = \trans(\inits, uv_{1} \cdots v_{i})$ for every $i \geq 1$.
Let $P = \trans(\inits, u) = \trans(\inits, uv_{1} \cdots v_{i})$ where $i \geq 1$.

We write such a maximal accepting run $\rho$ as $\rho = q \pathto{u_{0}}{} q_{0} \pathto{v_{1}}{} q_{1} \pathto{v_{2}}{} \cdots$ where $q \in \inits$ and $q_{j} \in P$ for all $j \geq 0$.
In order to simplify the notation, let $v_{0} = \emptyword$ and $[p]_{\preceq_{u_{0} v_{0} \cdots v_{j}}} = \max_{\preceq_{u_{0} v_{0} \cdots v_{j}}}\setcond{[p']_{\preceq_{u_{0} v_{0} \cdots v_{j}}} \in P/_{\preceq_{u_{0} v_{0} \cdots v_{j}}}}{p' \pathto{v_{j+1}}{} q_{j+1}}$ for all $j \geq 0$.
Since $p, q_{j} \in \trans(\inits, u_{0} v_{0} \cdots v_{j}) = P$, we know that $q_{j}$ and $p$ are comparable under $\preceq_{u_{0} v_{0} \cdots v_{j}}$ according to Definition~\ref{def:preorder-u}.
Below we prove that $q_{j } \simeq_{u_{0} v_{0} \cdots v_{j} }p$, i.e., $[q_{j}]_{\preceq_{u_{0} v_{0} \cdots v_{j}}} \simeq_{u_{0} v_{0} \cdots v_{j}} [p]_{\preceq_{u_{0} v_{0} \cdots v_{j}}}$.
By definition, there exists a state $p' \in [p]_{\preceq_{u_{0} v_{0} \cdots v_{j}}}$ such that $p' \pathto{v_{j+1}}{} q_{j+1}$.
Then we only need to prove that $p' \simeq_{u_{0} v_{0} \cdots v_{j}} q_{j}$.

We prove by contradiction that $p' \preceq_{u_{0} v_{0} \cdots v_{j}} q_{j}$ holds, so assume that $q_{j} \prec_{u_{0} v_{0} \cdots v_{j}} p'$, i.e., $p' \not\preceq_{u_{0} v_{0} \cdots v_{j}} q_{j} $.
This implies that $\rho$ cannot be a maximal accepting run over $w$, since we can obtain an accepting run $\rho'$ by extending $q' \pathto{u_{0} v_{0} \cdots v_{j}}{} p' \pathto{v_{j+1}}{} q_{j+1}$ for some $q' \in \inits$ with $q_{j+1} \pathto{v_{j+2}}{} q_{j+2} \cdots $ such that the prefix $q' \pathto{u_{0} v_{0} \cdots v_{j}}{} p'$ of $\rho'$ is greater than the prefix $q \pathto{u_{0} v_{0} \cdots v_{j}}{} q_{j}$ according to Definition~\ref{def:preorder-u};
here we have $q' \pathto{u_{0} v_{0} \cdots v_{j}}{}p'\in \max(\Pi_{u_{0} v_{0} \cdots v_{j}, p'})$.
Note that according to Corollary~\ref{coro:preceq-init-to-q}, if $q \fpathto{u_{0} v_{0} \cdots v_{j}}{}q_{j}$, then we can always find the above initial state $q'$ such that $q' \fpathto{u_{0} v_{0} \cdots v_{j}}{} p'$.
Thus the constructed run $\rho'$ will be accepting even if $\rho$ eventually reaches an infinite repetition of a set of states.
Since $\rho'$ is greater than $\rho$, this contradicts the fact that $\rho$ is maximal.
Therefore we have that $p' \preceq_{u_{0} v_{0} \cdots v_{j}} q_{j}$.

Analogously, we can prove that $q_{j}\preceq_{u_{0} v_{0} \cdots v_{j}} p'$.
If $p' \prec_{u_{0} v_{0} \cdots v_{j}} q_{j}$, i.e., $q_{j} \not\preceq_{u_{0} v_{0} \cdots v_{j}} p'$, then $[p']_{\preceq_{u_{0} v_{0} \cdots v_{j}}}$ is not the maximal equivalence class that has a state $p'$ reaching $q_{j+1}$ over $v$.
Therefore, we have $q_{j} \simeq_{u_{0} v_{0} \cdots v_{j}} p'$.
Since we have proved that $\trans(\inits, u)/_{\preceq_{u}} = \trans(\inits, uv_{j})/_{\preceq_{u_{0} v_{0} \cdots v_{j}}}$ for all $j \geq 0$, it follows that $[q_{j}]_{\preceq_{u}} = \max_{\preceq_{u}}\setcond{[p]_{\preceq_{u}} \in P/_{\preceq_{u}}}{p \pathto{v_{j+1}}{}q_{j+1}}$ for all $j \geq 0$, as required.
\end{proof}

\optSaturationLemma*
\begin{proof}
Here we only prove Item~(1).
The proofs for Items~(2) and (3) are minor adaptations of the corresponding ones for Lemma~\ref{lem:saturation-impoved-fdfws}.

If $\oclass{u} \oproclass{v}^{\omega} \cap \lang{\A} = \emptyset$, then Item~(1) trivially holds, so assume that $\oclass{u} \oproclass{v}^{\omega} \cap \lang{\A} \neq \emptyset$;
we shows that this implies that $\oclass{u}\oproclass{v}^{\omega} \subseteq \lang{\A}$.
Let $P = \trans(\inits, u)$ and $w \in \oclass{u}\oproclass{v}^{\omega} \cap \lang{\A} \neq \emptyset$;
this means that $w$ can be written as $w = u_{0} \cdot v_{1} \cdot v_{2} \cdots $ with $u_{0} \in \oclass{u}$ and $v_{i} \in \oproclass{v}$ for each $i \geq 1$.
Since $w \in \lang{\A}$, according to Lemma~\ref{lem:existence-max-run}, there exists an accepting run of $\A$ over $w$ that can be written as $\rho_{w} = q \pathto{u_{0}}{\trans} q_{0} \pathto{v_{1}}{\trans} q_{1} \pathto{v_{2}}{\trans} q_{2} \cdots$ where $q \in \inits$ and  $[q_{j}]_{\preceq_{u}} = \max_{\preceq_{u}}\setcond{[p]_{\preceq_{u}} \in P/_{\preceq_{u}}}{p \pathto{v_{j+1}}{} q_{j+1}}$ for each $j \geq 0$.

Let $w'$ be an arbitrary word in $\oclass{u} \oproclass{v}^{\omega}$, i.e., $w' = u'_{0} v'_{1} v'_{2} \cdots$ with $u'_{0} \in \oclass{u}$ and $v'_{i} \in \oproclass{v}$ for each $i\geq 1$.
We now show how to construct an accepting run over $w'$, thus we prove that $w' \in \lang{\A}$, i.e., $\oclass{u} \oproclass{v}^{\omega} \subseteq \lang{\A}$, as required, by proving that the constructed accepting run $\rho_{w'}$ is such that $\rho_{w'} = q' \pathto{u'_{0}}{} q'_{0} \pathto{v'_{1}}{}q'_{1} \cdots$ where $q' \in \inits$ and $q'_{j} \simeq_{u} q_{j}$ for each $j \geq 0$.
Note that for each state $q_{j+1}$ of $\rho_{w}$, since $v_{j+1} \proEq^{o}_{u} v'_{j+1} \proEq^{o}_{u} v$, we have that $[q_{j}]_{\preceq_{u}} = \max_{\preceq_{u}}\setcond{[p]_{\preceq_{u}} \in P/_{\preceq_{u}}}{p \pathto{v'_{j+1}}{} q_{j+1}} $ for all $j \geq 0$, by Item~(2)-(i) of Definition~\ref{def:opt-proEq}.
If $q_{j} \pathto{v'_{j+1}}{} q_{j+1}$ holds, then we set $q'_{j} = q_{j}$;
otherwise, according to Lemma~\ref{lem:subseteq-class-u}, for each $q'_{j+1} \in [q_{j+1}]_{\preceq_{uv_{0}\cdots v_{j+1}}} = [q_{j+1}]_{\preceq_{u}}$, there is a state $q'_{j} \in [q_{j}]_{\preceq_{u}}$ such that $q'_{j} \pathto{v'_{j+1}}{} q'_{j+1}$.
So there exists a state $q'_{j} \in [q_{j}]_{\preceq_{u}}$ such that $q'_{j} \pathto{v_{j+1}}{} q_{j+1}$.
No matter how large $j$ is, by moving backward we can always find a state $q'_{j-1} \in [q_{j-1}]_{\preceq_{u}}$ that can reach $q'_{j}$ over $v'_{j}$.
Since the number of equivalence classes in $P/_{\preceq_{u}}$ and $\states$ are both bounded by $n = \size{\states}$, we can find each state $q'_{j} \in [q_{j}]_{\preceq_{u}}$ for which $q'_{j} \pathto{v'_{j+1}}{} q'_{j+1}$ holds, for all $j \geq 0$.
In this way, we are able to construct a run $\rho_{w'} = q' \pathto{u'_{0}}{} q'_{0} \pathto{v'_{1}}{} q'_{1} \cdots$ where $q' \in \inits$ and $q'_{j} \simeq_{u} q_{j}$ for every $j \geq 0$.
We note that when choosing an appropriate $q'_{j}$, we can always choose a state $q'_{j}$ which is on a run $\pi \in \max(\Pi_{u'_{0}v'_{j+1}}, q'_{j+1}) = \max(\Pi_{uv'_{j+1}}, q'_{j+1})$ given a state $q'_{j+1}$, i.e., we choose $q'_{j} = \pi[\size{u'_{0}v'_{0} \cdots v'_{j}} + 1]$ according to Lemma~\ref{lem:subseteq-class-u}.
If $q_{j} \fpathto{v_{j+1}}{} q_{j+1}$ for $j\geq 0$, then for every run $\pi \in \max(\Pi_{uv_{j+1}}, q_{j+1})$, we have that $\pi[\size{u} + 1] \fpathto{v_{j+1}}{} q_{j+1}$ since $\pi[\size{u}+1] \simeq_{u} q_{j}$ and $q_{j} \in \max_{\preceq_{u}}\setcond{[p]_{\preceq_{u}} \in P/_{\preceq_{u}}}{p \pathto{v_{j+1}}{} q_{j+1}}$ by Lemma~\ref{lem:subseteq-class-u};
otherwise $\pi$ would not be in $\max(\Pi_{uv_{j+1}}, q_{j+1})$.
So we have $p_{1} \fpathto{v_{j+1}}{}q_{j+1}$ with $p_{1} = \pi[\size{u}+1]$.
By (2)-(ii) of Definition~\ref{def:opt-proEq}, there exists a run $\pi \in \max(\Pi_{uv'_{j+1}, q_{j+1}})$ such that $\pi[\size{u}+1] \fpathto{v'_{j+1}}{} q_{j+1}$  since $v'_{j+1} \proEq^{o}_{u}v_{j+1}$.
According to Lemma~\ref{lem:all-or-none-acc-preorder}, since $q'_{j} \in \setcond{\pi[\size{u}+1]}{\pi  \max(\Pi_{uv'_{j+1}}, q'_{j+1})}$ and $q'_{j+1} \simeq_{uv'_{j+1}} q_{j+1}$, we also have $q'_{j} \fpathto{v'_{j+1}}{} q'_{j+1}$.

Since there are infinitely many $j$s such that $q_{j} \fpathto{v_{j+1}}{} q_{j+1}$, we also have $q'_{i} \fpathto{v'_{j+1}}{} q'_{j+1}$ for infinitely many $j$s.
Therefore $\rho'_{w'}$ also visits accepting states infinitely often, thus $w' \in \lang{\A}$.
Given the arbitrary choice of $w' \in \oclass{u}\oproclass{v}^{\omega}$, it follows that $\oclass{u}\oproclass{v}^{\omega} \subseteq \lang{\A}$, as required.
\end{proof}

\subsection{Proofs of Section~\ref{sec:applications}}
\label{app:fdfw}
\correctnessOfImprovedFDFW*
\begin{proof}

To prove Item~(1), assume that $w \in \upword{\F}$.
Then there exists a normalized decomposition $(u', v')$ of $w$ such that $(u', v')$ is accepted by $\F$.
Let $u' \in \oclass{u}$ and $v' \in \oproclass{v}$.
It follows that we have $\oclass{u} = \M(u) =\M(u')= \M(uv) = \M(u'v')$ and $\oproclass{v} = \N_{u}(v')$ is an accepting macrostate.
Thus $u'v'^{\omega} \notin \lang{\A}$ by Definition~\ref{def:improved-ramsey-fdfws}, i.e., $\upword{\F} \subseteq \upword{\infwords \setminus \lang{\A}}$.
For the other direction, we assume that $w \in \upword{\infwords \setminus \inflang{\A}}$.
By Lemma~\ref{lem:saturation-opt-fdfws}, there exists an $\omega$-language $\oclass{u}\oproclass{v}^{\omega} \cap \lang{\A}$ with $uv\canoEq^{o} u$ such that $w \in \oclass{u}\oproclass{v}^{\omega}$.
Without loss of generality, let $w = uv^{\omega}$.
It follows that $\N_{u}(v)$ is an accepting macrostate, i.e., $w$ is accepted by $\F$.
Therefore, $\upword{\infwords \setminus \inflang{\A}} \subseteq \upword{\F}$.
We then have that $\upword{\F} = \upword{\infwords \setminus \inflang{\A}}$.

Consider now Item~(2).
Given two normalized decompositions $(u_{1}, v_{1})$ and $(u_{2}, v_{2})$ of $w$, assume that $(u_{1}, v_{1})$ is accepted by $\F$.
It follows that $w = u_{1}v_{1}^{\omega} = u_{2}v^{\omega}_{2}$ is not accepted by $\A$.
By Definition~\ref{def:improved-ramsey-fdfws}, let $\oclass{u} = \M(u_{2})$ and $\oproclass{v} = \N_{u}(v_2)$ where $\N_{u}$ is the progress DFW of $\oclass{u}$ of $\M$.  It follows that $u_{2} \in \oclass{u}$ and $v_{2} \in \oproclass{v}$.
According to the proof of Lemma~\ref{lem:saturation-opt-fdfws}, we have that $uv \canoEq^{o} u$ if $u_{2} v_{2}\canoEq^{o} u_{2}$.
By Lemma~\ref{lem:saturation-opt-fdfws}, we have $\oclass{u}\oproclass{v}^{\omega} \cap \lang{\A} = \emptyset$ since $u_{2}v^{\omega}_{2} \in \oclass{u}\oproclass{v}^{\omega}$ and $uv \canoEq^{o} u$.
According to Definition~\ref{def:improved-ramsey-fdfws}, $\oclass{v}$ is thus an accepting macrostate of $\N_{u}$.
It follows that $(u_{2}, v_{2})$ is also accepted by $\F$.
Thus $\F$ is saturated.

Lastly, for proving Item~(3), by Lemma~\ref{lem:size-opt-canoEq} there are at most $n^{n}$ macrostates in the leading DFW $\M$.
For each progress DFW $\N_{u}$ of $\oclass{u}$, by Lemma~\ref{lem:upper-bound-opt-proEq} there are at most $n^{n}\times (n+1)^{n} \times 2^n$ macrostates, since the number of equivalence classes of $\proEq^{o}_{u}$ is at most $n^{n}\times (n+1)^{n} \times 2^n$.
Therefore, there are $n^{n} + n^{n} \times n^{n} \times (n+1)^{n}\times 2^n \in 2^{\bigO(n \log n)}$ macrostates in $\F$.

\end{proof}

We formalize the results inspired from \cite{Yan/08/lowerComplexity} as below.
\begin{restatable}[]{theorem}{lowerBoundOfImprovedFDFW}
\label{thm:lower-bound-fdfws}
There exists a family of NBWs $\A_{1}, \cdots, \A_{n}$ with $n$ states for which an FDFW $\F_{n}$ accepting $\infwords \setminus \lang{\A_{n}}$ has $2^{\Omega(n \log n)}$ macrostates.
\end{restatable}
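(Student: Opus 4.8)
The plan is to exhibit an explicit family of NBWs for which every FDFW recognizing the complement language must be large, by a fooling-set style counting argument on the leading DFW together with the progress DFWs. Following the hint, I would adapt the construction used for \cite[Theorem 14]{AngluinBF18}. A natural candidate is the classical "permutation" family over the alphabet $\Sigma_n = \{1,\dots,n\} \cup \{\#\}$: the NBW $\A_n$ has states $\{q_1,\dots,q_n\}$, reads a letter $i$ by a suitable permutation-like transition, and accepts exactly those $\omega$-words whose ultimately periodic witnesses encode a permutation cycle through a designated state. The key feature to engineer is that $\infwords \setminus \lang{\A_n}$ contains, for every permutation $\sigma$ of $\{1,\dots,n\}$, a UP-word $w_\sigma = u_\sigma v_\sigma^\omega$ whose \emph{only} accepting decompositions in any FDFW for the complement are forced to route through a macrostate that depends on $\sigma$.

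The core of the argument is a counting lemma. Suppose $\F = (\M, \{\N_q\})$ is any saturated FDFW with $\upword{\F} = \upword{\infwords \setminus \lang{\A_n}}$. First I would argue that the leading DFW $\M$ must have at least $n!$ (hence $2^{\Omega(n\log n)}$) macrostates, or, if it is small, that some single progress DFW $\N_q$ must be correspondingly large; in either case the total macrostate count is $2^{\Omega(n\log n)}$. Concretely: associate to each permutation $\sigma$ the finite word $u_\sigma$ that is its one-line notation. If $\M(u_\sigma) = \M(u_\tau)$ for $\sigma \ne \tau$, then since $\F$ is saturated and the leading automaton is deterministic, the progress DFW at that shared macrostate must correctly separate the periodic parts $v_\sigma$ and $v_\tau$ that distinguish $w_\sigma \in \infwords\setminus\lang{\A_n}$ from some $w'$ with $u_\tau$-prefix that lies \emph{inside} $\lang{\A_n}$. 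This forces that progress DFW to have many inequivalent states — one would set up a second fooling set of size roughly $(n-1)!$ inside $\N_{\M(u_\sigma)}$. Averaging (pigeonhole) over the $n!$ permutations, $|\M| \cdot \max_q |\N_q| \ge n!$, so $\F$ has $2^{\Omega(n\log n)}$ macrostates.

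The main obstacle, and the step needing the most care, is designing $\A_n$ so that the complement language genuinely encodes permutations in a way visible to \emph{every} FDFW, and then pinning down the Myhill–Nerode-style distinguishing argument for the progress automata: FDFW acceptance is subtle because it quantifies over normalized decompositions, so I must ensure the distinguishing words $v_\sigma$ I use actually give normalized decompositions with respect to the (unknown) $\M$, which is why the two-case split on whether $\M$ is already large is essential — if $\M$ is large we are done immediately, and if $\M$ is small then all the $u_\sigma$ collapse into few macrostates and their $v$-parts become normalizable at those macrostates. A clean way to handle this is to first reduce to NBWs via Lemma~\ref{lem:fdfw-to-nbw}: if $\F$ had $2^{o(n\log n)}$ macrostates, the resulting NBW for $\infwords\setminus\lang{\A_n}$ would have $2^{o(n\log n)}$ states, contradicting the known $2^{\Omega(n\log n)}$ lower bound for \buchi complementation on this very family. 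I would present the direct fooling-set argument as the main proof and mention the reduction-to-NBW route as an alternative; the direct argument is what makes the bound "tight for FDFWs" rather than merely inherited.

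Finally, I would verify the parameters: the alphabet has size $n+1$, the NBW has $n$ states (or $O(n)$ if a sink or marker state is needed, which still gives the stated bound after reindexing), $n! = 2^{\Theta(n\log n)}$, and the argument yields $2^{\Omega(n\log n)}$ macrostates for any FDFW accepting the complement, matching the upper bound $2^{\bigO(n\log n)}$ from Theorem~\ref{thm:recurrent-fdfw-neg}(3). This also immediately yields Theorem~\ref{thm:lower-bound-rc}, since the right congruences $(\canoEq^o, \bigcup_u \{\proEq^o_u\})$ induce such an FDFW via Definition~\ref{def:improved-ramsey-fdfws}, so their combined index cannot be $2^{o(n\log n)}$ either.
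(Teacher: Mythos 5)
Your ``alternative'' route is in fact the paper's entire proof: the paper gives no direct fooling-set argument for this theorem. It takes the family $\A_n$ from \cite{Yan/08/lowerComplexity}, for which every NBW accepting $\infwords \setminus \lang{\A_n}$ needs $2^{\Omega(n\log n)}$ states, and argues that if an FDFW $\F_n$ for the complement had fewer than $2^{\Omega(n\log n)}$ macrostates, Lemma~\ref{lem:fdfw-to-nbw} would turn it into an NBW with $\bigO(m^3)$ states for the same language, a contradiction. So the portion of your proposal that is complete and sound is precisely the two-sentence reduction you relegate to a side remark. Note that this reduction (yours and the paper's) relies on the hypothesis of Lemma~\ref{lem:fdfw-to-nbw}, i.e., that $\F_n$ is saturated; you do state this assumption, and it is what matters for the paper's use of the theorem, since the FDFW of Definition~\ref{def:improved-ramsey-fdfws} is saturated by Theorem~\ref{thm:recurrent-fdfw-neg}.

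Your primary route --- a direct fooling-set argument on the leading and progress DFWs, in the spirit of \cite[Theorem 14]{AngluinBF18} --- is genuinely different and, if completed, would give a more intrinsic lower bound not inherited from \buchi complementation. But as written it has real gaps. First, the family $\A_n$ is never defined: ``a suitable permutation-like transition \dots accepts exactly those $\omega$-words whose ultimately periodic witnesses encode a permutation cycle'' is a wish, not a construction, and the whole argument hinges on the complement genuinely encoding permutations. Second, the key counting lemma is only asserted: for $\sigma \neq \tau$ with $\M(u_\sigma) = \M(u_\tau)$ you never exhibit concrete periodic words that are normalized at the shared macrostate and whose membership status must differ, nor do you prove the claimed $(n-1)!$-size fooling set inside a single progress DFW; the pigeonhole inequality $|\M|\cdot\max_q|\N_q| \ge n!$ needs this separation to hold simultaneously for all permutations collapsed to one macrostate, which is exactly the delicate step you defer. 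Third, your reduction sentence quietly assumes the $2^{\Omega(n\log n)}$ NBW complementation lower bound holds ``on this very family'' that you constructed ad hoc; to be airtight one should simply use the Michel/Yan family for which that bound is actually proved --- which is what the paper does, and which renders the bespoke construction unnecessary for the stated theorem.
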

\begin{proof}

It is shown in~\cite{Yan/08/lowerComplexity} that there exists a family of $\omega$-languages $L_{1}, \cdots, L_{n}$ with $n \geq 1$, such that for each $n$, there exists an NBW $\A_{n}$ with $n$ states whose language is $L_{n}$, while all NBWs accepting the complementary language $L^{c}_{n}$ have $2^{\Omega(n\log n)}$ states.

Assume that the FDFW $\F_{n}$ accepting $\infwords\setminus \lang{\A_{n}}$ has $m$ macrostates such that $m < 2^{\Omega(n\log n)}$.
According to Lemma~\ref{lem:fdfw-to-nbw}, we can construct a complementary NBW $\A^{c}_{n}$ of $\A_{n}$ with $\bigO(m^3)$ states from $\F$ such that $\upword{\F} = \upword{\infwords \setminus \lang{\A_{n}}}= \upword{\lang{\A^{c}_{n}}}$.
It follows that $\A^{c}_{n}$ has $\bigO(m^3) < 2^{\Omega(n \log n)}$ states, which contradicts the results in~\cite{Yan/08/lowerComplexity}.
Therefore, $\F$ has $2^{\Omega(n \log n)}$ macrostates.
\end{proof}

\lowerBoundOptRc*
\begin{proof}
In order to prove the statement, we use results presented in Sect.~\ref{sec:applications}.
According to Definition~\ref{def:improved-ramsey-fdfws} and Definition~\ref{def:induced-dfw}, there is a bijection between a state in the FDFW $\F$ and each equivalence class defined with $(\canoEq^{o}, \bigcup_{u \in \finwords} \setnocond{\proEq^{o}_{u}})$.
That is, we have a linear translation from the right congruence relations $(\canoEq^{o}, \bigcup_{u \in \finwords} \setnocond{\proEq^{o}_{u}})$ to an FDFW $\F$ accepting $\infwords\setminus \lang{\A}$.

We prove the claim of the theorem by contradiction:
assume that there exists a family of right congruence relations $(\canoEq, \bigcup_{u \in \finwords} \setnocond{\proEq_{u}})$ satisfying the saturation condition $\class{u}\proclass{v}^{\omega} \cap \lang{\A} = \emptyset$ or $\class{u}\proclass{v}^{\omega} \subseteq \lang{\A}$ such that its number of equivalence classes is less than $2^{n \log n}$.
Since the number of equivalence classes of $(\canoEq, \bigcup_{u \in \finwords} \setnocond{\proEq_{u}})$ is bounded (less than $2^{n\log n}$) and for each $u\in\finwords, v \in \poswords$, if $uv \canoEq u$, then either $\class{u}\proclass{v}^{\omega} \cap \lang{\A} = \emptyset$ or $\class{u}\proclass{v}^{\omega} \subseteq \lang{\A}$, it follows that $\infwords \setminus\lang{\A} = \bigcup \setcond{\class{u}\proclass{v}^{\omega}}{u \in \finwords, v \in \poswords, uv\canoEq u, \class{u}\proclass{v}^{\omega} \cap \lang{\A} = \emptyset}$ (the proof is similar to the ones for Lemma~\ref{lem:saturation-impoved-fdfws} and~\ref{lem:saturation-opt-fdfws}).
Thus, $(\canoEq, \bigcup_{u \in \finwords} \setnocond{\proEq_{u}})$ also recognize $\infwords \setminus \lang{\A}$ for all input NBWs $\A$.
According to Definition~\ref{def:improved-ramsey-fdfws} and Definition~\ref{def:induced-dfw}, we are able to construct an FDFW $\F$ accepting $\infwords\setminus \lang{\A}$ with less than $2^{n \log n}$ macrostates for all input NBWs $\A$, contradicting Theorem~\ref{thm:lower-bound-fdfws}.
Thus the claim follows.
\end{proof}

\section{Connection to Preorder of Vertices defined in~\cite{Fogarty13,DBLP:journals/iandc/FogartyKVW15}}
\label{app:comparisonProfile}

In this section, we first introduce the definition of preorder given in~\cite{Fogarty13,DBLP:journals/iandc/FogartyKVW15} and show its connections with the preorder $\preceq_{u}$ we introduced in Sect.~\ref{sec:optimal-rc}.
The preorder in~\cite{Fogarty13,DBLP:journals/iandc/FogartyKVW15} is defined on the vertices of a run directed acyclic graph (run DAG for short), so we first introduce the notion of run DAGs.

\subsection{Run DAGs}
The run directed acyclic graphs were proposed by Kupferman and Vardi in~\cite{kupferman2001weak} for simultaneously reasoning about all runs of an NBW on a given $\omega$-word.
Let $\A = (\states, \inits, \trans, \acc)$ be an NBW and $w = a_1 a_2 \cdots$ be an infinite word.
The run DAG $G_{w, \A} = \langle V, E \rangle$ of $\A$ over $w$ is defined as follows:
\begin{itemize}
\item
    Vertices: $V \subseteq \states \times \naturals$ is the set of vertices  $V = \setcond{\vertex{q}{l} \in \states \times \naturals}{v \in V_{l}, l \geq 1}$ where $V_{1} = \inits$ and $V_{l + 1} = \trans(V_{l}, a_{l})$ for every $l \geq 1$.
\item
    Edges: There is an edge from $\langle q, l\rangle$ to $\langle q', l'\rangle$ if $l' = l + 1$ and $q' \in \trans(q, a_l)$.
\end{itemize}

A vertex $\vertex{q}{l}$ is said to be on level $l$ and there are at most $\size{\states}$ states on each level. A vertex $\vertex{q}{l}$ is an \emph{$\acc$-vertex} if $q \in \acc$.
A finite/infinite sequence of vertices $\hat{\run} = \vertex{q_{1}}{1}\vertex{q_{2}}{2} \cdots$ is called a \emph{branch} of $G_{w, \A}$ if $q_{1} \in \inits$ and for each $1 \leq l < k$, there is an edge from $\vertex{q_{l}}{l}$ to $\vertex{q_{l + 1}}{l+ 1}$;
when the branch is infinite, we set $k = \infty$.
An \emph{$\omega$-branch} of $G_{w, \A}$ is a branch of infinite length, i.e., $k = \infty$.
A finite \emph{fragment} $\vertex{q_{l}}{l} \vertex{q_{l+1}}{l+1} \cdots$ of $\hat{\run}$ is said to be a branch from the vertex $\vertex{q_{l}}{l}$;
a fragment $\vertex{q_{l}}{l} \cdots \vertex{q_{l+k}}{l+k}$ of $\hat{\run}$ is said to be a \emph{path} from $\vertex{q_{l}}{l}$ to $\vertex{q_{l+k}}{l+k}$, where $k \geq 1$.
A vertex $\vertex{q_{j}}{j}$ is \emph{reachable} from $\vertex{q_{l}}{l}$ if there is a path from $\vertex{q_{l}}{l}$ to $\vertex{q_{j}}{j}$.
We call a vertex $\vertex{q}{l}$ \emph{finite} in $G_{w, \A}$ if there are no $\omega$-branches in $G_{w, \A}$ starting from $\vertex{q}{l}$;
we call a vertex $\vertex{q}{l}$ \emph{$\acc$-free} if it is not finite and no $\acc$-vertices are reachable from $\vertex{q}{l}$ in $G_{w, \A}$.

There is a bijection between the set of runs of $\A$ on $w$ and the set of $\omega$-branches in $G_{w,\A}$.
To a run $\run = q_{1} q_{2} \cdots$ of $\A$ over $w$ corresponds an $\omega$-branch $\hat{\run} = \langle q_{1}, 1 \rangle\langle q_{2}, 2 \rangle \cdots$.
Therefore, $w$ is accepted by $\A$ if and only if there exists an $\omega$-branch in $G_{w, \A}$ that visits $\acc$-vertices infinitely often;
we say that such an $\omega$-branch is \emph{accepting}; $G_{w, \A}$ is accepting if and only if there exists an accepting $\omega$-branch in $G_{w, \A}$.

\subsection{Profile of Run DAGs }
\label{app:profile}
The following content is taken from Section 3.1 of~\cite{DBLP:journals/iandc/FogartyKVW15} and adapted to our notation.
Consider the run DAG $G_{w, \A} = \langle V, E\rangle$.
Let $f \colon V \to \setnocond{0, 1}$ be such that $f(\langle q, i\rangle) = 1$ iff $q \in \acc$ and $f(\langle q, i \rangle) = 0$ otherwise.
So $f$ labels $\acc$-vertices by $1$ and all other vertices by $0$.
The \emph{profile} of a branch in $G_{w, \A}$ is the sequence of labels of vertices in that branch.
The profile of a vertex is the lexicographically maximal profile of all branches from initial vertices to that vertex, where an initial vertex is a vertex $\vertex{q}{1}$ such that $q \in \inits$.
The profile of a path  $b = v_{1}, \cdots, v_{k}$ in $G_{w,\A}$ is formalized as $h_{b} = f(v_{1})\cdots f(v_{k})$ and the profile of a branch $b = v_{1}, v_{2}, \cdots$ is $h_{b} = f(v_{1}) f(v_{2}) \cdots$.
Here we define that $h_{b} > h_{b'}$ or $h_{b'} < h_{b}$ if there is a prefix $\alpha 1$ of $h_{b}$ such that $\alpha 0$ is a prefix of $h_{b'}$ and we say that $h_{b}$ is lexicographically \emph{greater} than $h_{b'}$.
Naturally, we have $h_{b} = h_{b'}$ if they are exactly the same binary sequence.
Lastly, the profile of a vertex $v$, written as $h_{v}$, is the lexicographically maximal element of $\setcond{h_{b}}{\text{$b$ is an branch to $v$}}$.
The lexicographic order of profiles of each vertex at the same level induces a linear preorder over vertices on every level of $G_{w,\A}$.
Below we define the linear preorders $\leq_{i}$ on the vertices at level $i \geq 1$ of $G_{w,\A}$.

For vertices $v_{1}$ and $v_{2}$ at level $i$, let $v_{1} \leq_{i} v_{2}$ if $h_{v_{1}} \not > h_{v_{2}}$;
in particular, we have that $v_{1} <_{i} v_{2}$ if $h_{v_{1}} < h_{v_{2}}$ and $v_{1} \simeq_{i} v_{2}$ if $h_{v_{1}} = h_{v_{2}}$.
Similarly to Section~\ref{sec:optimal-rc}, we can group vertices that are equivalent under the preorder $\leq_{i}$.
In particular, the last element of a vertex's profile is $1$ iff that vertex is an $\acc$-vertex.
So vertices in an equivalence class are either all $\acc$-vertices or all not.
In~\cite{DBLP:journals/iandc/FogartyKVW15}, the authors call an equivalence class $\acc$-class when all vertices are $\acc$-vertices and non-$\acc$-class otherwise.
We remark that a state can be reached from initial states by multiple finite runs over the same word due to the nondeterminism of NBWs.
Therefore, a vertex in a run DAG can have multiple incoming edges in $G_{w,\A}$.
So the authors propose to remove from $G_{w,\A}$ all edges that do not contribute to profiles of that vertex.
Formally, we can define the pruned run DAG $G'_{w,\A} = \langle V, E' \rangle$ where $E' = \setcond{( \vertex{p}{i}, \vertex{q}{i+1}) \in E}{\text{for every $\vertex{p'}{i}\in V$, if $(\vertex{p'}{i}, \vertex{q}{i+1}) \in E$, then $\vertex{p'}{i} \leq_{i} \vertex{p}{i}$}}$.
Intuitively, the pruning only keeps the predecessors that are maximal under $\leq_{i}$.
One can see that $G_{w,\A}$ and $G'_{w,\A}$ have the same set of vertices and an edge can be removed from $E$ only when there is another edge to the same destination.

Moreover, the removal of edges does not modify the profiles of vertices and vertices now derive their profiles from their parents in $G'_{w,\A}$, as formalized below.
\begin{lemma}[Lemma~3.1 of \cite{DBLP:journals/iandc/FogartyKVW15}]
For two vertices $u$ and $v$ in $V$, if $\langle u, v\rangle \in E'$, then $h_{v} = h_{u}0$ or $h_{v} = h_{u} 1$.
\end{lemma}

The vertices with different profiles can share a child in $G_{w,\A}$;
the following lemma precludes this from happening in $G'_{w,\A}$.
\begin{lemma}[Lemma~3.2 of \cite{DBLP:journals/iandc/FogartyKVW15}]
\label{lem:mapOfSubclass}
Let vertices $u$ and $v$ be at level $i$ in $G'_{w,\A}$ and vertices $u'$ and $v'$ at level $i+1$ in $G'_{w,\A}$.
If $\langle u, u' \rangle \in E'$, $\langle v, v' \rangle \in E'$ and $u' \simeq_{i+1} v'$, then $u \simeq_{i} v$.
\end{lemma}

Despite the removal of some edges in $G_{w,\A}$, $G'_{w,\A}$ is accepting iff $G_{w,\A}$ is accepting, as stated by following result.
\begin{lemma}[Theorem~3.3 of \cite{DBLP:journals/iandc/FogartyKVW15}]
The pruned run DAG $G'_{w,\A}$ is accepting iff $\A$ accepts $w$.
\end{lemma}

\subsection{Connection to Our Preorder $\preceq_{u}$}
First, we point out that the function $f$ in Sect.~\ref{app:profile} for defining profile is a counterpart to our function $\dirac$ in Section~\ref{sec:optimal-rc}.
The difference is that $f$ works on a branch in the run DAG $G_{w,\A}$ while $\dirac$ operates on the runs of $\A$.
Inherently, $f$ can be seen as the extension of $\dirac$ from runs of $\A$ to branches in the run DAG $G_{w,\A}$.

We would like to point out that the profile of a vertex $\vertex{q}{i}$ corresponds to the set of maximal runs $\max(\Pi_{w[1\cdots i-1]}, q)$ where $i \geq 0$.
Recall that if $i = 1$, $w[1\cdots 0] = \emptyword$.
This is because the profile of $\vertex{q}{i}$ only keeps the lexicographically maximal profiles from the initial vertices to $\vertex{q}{i}$;
lexicographically maximal profiles correspond to our maximal runs in $\max(\Pi_{w[1\cdots i-1]}, q)$.

The profiles of vertices at the same level $i$ induce a linear preorder $\leq_{i}$ which corresponds to the preorder $\preceq_{w[1\cdots i-1]}$.
One can see that $\vertex{q}{i} \leq_{i} \vertex{q'}{i}$ if and only if we have that $q \preceq_{w[1\cdots i-1]} q'$ and $\vertex{q}{i} <_{i} \vertex{q'}{i}$ if and only if we have that $q \prec_{w[1\cdots i-1]} q'$.

When pruning $G_{w,\A}$ to the run DAG $G'_{w,\A}$, from the definition of $E'$, we can see that the pruning only keeps the maximal predecessors at the previous level of a vertex.
Therefore we can observe that Lemma~\ref{lem:mapOfSubclass} is the counterpart to our Lemma~\ref{lem:subseteq-class-u}.
In our Lemma~\ref{lem:subseteq-class-u}, we proved that along the maximal runs to two different states in the same equivalence class, the states are stepwise equivalent under the preorder $\preceq_{u}$ we defined.
This is also reflected in Lemma~\ref{lem:mapOfSubclass} that along the branch in the pruned $G'_{w,\A}$ to equivalent vertices $u'$ and $v'$ at level $i+1$, their respective predecessors $u$ and $v$ are also equivalent.
Finally, the pruning does not change the acceptance of $G'_{w,\A}$ with respect to $G_{w,\A}$.
This also holds for our presentation of maximal runs as given in Lemma~\ref{lem:existence-max-run} (in Appendix~\ref{app:opt-rc}).

\section{Comparison with~\cite{calbrix1993ultimately} and~\cite{Kuperberg19dlt}}
\label{app:comparison}

First, we say a decomposition $(u, v)$ is \emph{captured} by $\F$ if we have $v \in \finlang{\N_{q}}$  where $q = \M(u)$.
So a decomposition $(u, v)$ is accepted by $\F$ if $(u, v)$ is normalized and captured by $\F$.

The FDFW $\F$ obtained from~\cite{calbrix1993ultimately} for $\A$ is represented with a DFW $\D_{\$}$ such that $\finlang{\D_{\$}} = \setcond{u \$ v \in \finwords \setnocond{\$} \poswords}{uv^{\omega} = w, w \in \lang{\A}}$ where $\$ \notin \alphabet$.
This construction has been further improved in~\cite{Kuperberg19dlt}, which produces a DFW $\D_{\$}$ with $\bigO(3^{n^{2} + n})$ states.
Both these two constructions yield an FDFW $\F$ such that each decomposition $(u, v)$ of $w \in \lang{\A}$ is captured by $\F$, as $\D_{\$}$ accepts each $u\$v$ such that $uv^{\omega} \in \lang{\A}$.
In contrast, ours only captures normalized decompositions of $w$, see the condition $uv \canoEq^{o} u$ in Definition~\ref{def:improved-ramsey-fdfws}, which can be changed to $uv \icanoEq u$ when dealing with $\icanoEq$ and $\proEq_{u}$.

Our direct construction for FDFWs with Definition~\ref{def:improved-ramsey-fdfws} replaced with $\icanoEq$ and $\proEq_{u}$ has the same worst-case complexity as the one in~\cite{Kuperberg19dlt} since $\size{\proEq_{u}} \leq 3^{n^{2}}$.
Nonetheless, we show that the FDFWs constructed in~\cite{calbrix1993ultimately, Kuperberg19dlt} may be exponentially larger than the FDFWs by our construction even for DBWs, since the FDFWs constructed in~\cite{calbrix1993ultimately, Kuperberg19dlt} try to capture each decomposition of desired UP-words.

\begin{restatable}{theorem}{comparisonCNP}
\label{thm:comparison-with-cnp}
There exists a family of DBWs $\C_{1}, \cdots, \C_{n}$ with $n+2$ states for which the FDFW obtained from~\cite{calbrix1993ultimately, Kuperberg19dlt} has $2^{\Omega(n \log n)}$ macrostates while the FDFW constructed in Definition~\ref{def:improved-ramsey-fdfws} has $\bigO(n^{2})$ macrostates.
\end{restatable}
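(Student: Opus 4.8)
The plan is to establish the two bounds separately, with essentially all of the work in the lower bound. The upper bound is immediate: whatever family we pick, $\B'_{n}$ will be a DBW with $n+2$ states, so Theorem~\ref{lem:size-dbw-improved-fdfw} gives $\Sigma_{\iclass{u}\in\iquotient}\size{\proEq_{u}}\in\bigO(n^{2})$, and hence the FDFW of Definition~\ref{def:improved-ramsey-fdfws} instantiated with $\icanoEq$ and $\proEq_{u}$, whose macrostates number $\size{\icanoEq}+\Sigma_{\iclass{u}\in\iquotient}\size{\proEq_{u}}$, has $\bigO(n^{2})$ macrostates. For the lower bound, recall from the discussion above that the construction of~\cite{calbrix1993ultimately, Kuperberg19dlt} represents its FDFW by a DFW $\D_{\$}$ with $\finlang{\D_{\$}}=\setcond{u\$v\in\finwords\$\poswords}{uv^{\omega}\in\lang{\B'_{n}}}$, and that this FDFW has at least as many macrostates as $\D_{\$}$. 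Since any DFW for a language has at least as many states as the Myhill--Nerode index of that language, it suffices to design $\B'_{n}$ so that $\finlang{\D_{\$}}$ has index $2^{\Omega(n\log n)}$.

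I would take $\B'_{n}$ to be a deterministic automaton with $n$ ``data'' states on which the non-special letters act as permutations, chosen so that the transition monoid restricted to these letters realizes all of the symmetric group $S_{n}$ on the data states (for instance with the letters acting as adjacent transpositions), together with two extra states and a ``check'' letter implementing a Büchi acceptance gadget that keeps firing only when a designated home state is revisited. For each $\pi\in S_{n}$ pick a word $v_{\pi}$ whose induced transformation of the data states is $\pi$; the witnesses are the words $\$v_{\pi}$. The crux is the claim that for $\pi\neq\pi'$ there is a suffix $z$ such that exactly one of $(v_{\pi}z)^{\omega}$ and $(v_{\pi'}z)^{\omega}$ lies in $\lang{\B'_{n}}$: choosing $z$ to contribute a transformation $\sigma$ for which the (eventually periodic) run of $\B'_{n}$ on $(v_{\pi}z)^{\omega}$ from the initial state hits an accepting state infinitely often iff $\sigma\circ\pi$ has a prescribed cycle behaviour (e.g., fixes the home state) lets $z$ ``probe'' any chosen coordinate of $\pi$, so that some such $z$ separates $\pi$ from $\pi'$. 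This yields $n!=2^{\Omega(n\log n)}$ pairwise inequivalent words, and therefore $\D_{\$}$, and with it the FDFW of~\cite{calbrix1993ultimately, Kuperberg19dlt}, has $2^{\Omega(n\log n)}$ macrostates.

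The hard part is engineering $\B'_{n}$ within the tight budget of $n+2$ deterministic states so that exactly the weakness of~\cite{calbrix1993ultimately, Kuperberg19dlt} is triggered --- namely that its progress automata must capture \emph{every} decomposition of a word in $\lang{\B'_{n}}$, not only the normalized ones --- whereas our construction stays at $\bigO(n^{2})$ precisely because $\proEq_{u}$ only needs to track the $\bigO(n)$-sized loop structure of the \emph{normalized} periods (Theorem~\ref{lem:size-dbw-improved-fdfw}). Concretely, the delicate step is the distinguishing claim: one must wire the acceptance gadget and the suffix $z$ so that membership of $(v_{\pi}z)^{\omega}$ in $\lang{\B'_{n}}$ depends only on the single value by which $\pi$ and $\pi'$ differ and not on the intermediate data states traversed while reading $v_{\pi}$, and then verify, by tracing the two eventually periodic runs, that the accepting state lies on the limit cycle of exactly one of them.
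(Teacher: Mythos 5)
Your upper-bound half is fine and matches the paper: for any DBW with $n+2$ states, Theorem~\ref{lem:size-dbw-improved-fdfw} bounds the leading DFW by $\bigO(n)$ and the progress DFWs by $\bigO(n^{2})$ in total, and the reduction of the lower bound to the Myhill--Nerode index of $\finlang{\D_{\$}}$ is legitimate, since any DFW for that language has at least index-many states and the FDFW of \cite{calbrix1993ultimately,Kuperberg19dlt} has at least as many macrostates as $\D_{\$}$. The gap is that the entire substance of the theorem --- exhibiting a concrete family $\B'_{n}$ with $n+2$ states for which $n!$ words $\$v_{\pi}$ are pairwise inequivalent --- is never carried out: you describe a permutation automaton with a ``check'' letter and then explicitly defer both the wiring of the gadget and the distinguishing claim as ``the hard part.'' This is not a routine detail. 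In a \emph{deterministic} automaton with a fixed initial state, after reading $v_{\pi}$ the only canonical information about $\pi$ that survives is the current state, i.e.\ a single value of $\pi$; a check of the form ``accept iff the home state is revisited'' therefore separates permutations only by that one coordinate, yielding $n$ classes rather than $n!$. To probe other coordinates you must exploit the limit cycle of the periodic run (so that later repetitions of $v_{\pi}$ are entered from other states), while simultaneously ensuring that the check letter occurring in the first, misaligned period does not kill the run --- all within a budget of two extra states. Whether your specific gadget achieves this is exactly what a proof would have to verify, and as written it could equally well fail; so the proposal is a plan with its central lemma missing.

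For comparison, the paper avoids inventing a new family altogether: it takes the DBW $\B'_{n}$ obtained from the NBW $\B_{n}$ of Figure~\ref{fig:family-nbw-ryc-smaller} by deleting the accepting state $q_{-1}$ and turning $q_{0}$ into a non-accepting sink (giving $n+2$ states), applies Theorem~\ref{lem:size-dbw-improved-fdfw} for the $\bigO(n^{2})$ side, and for the lower bound invokes the result of~\cite{AngluinF16} that any FDFW capturing \emph{every} decomposition of the words of $\lang{\B'_{n}}$ --- which the constructions of \cite{calbrix1993ultimately,Kuperberg19dlt} must do --- needs at least $(n+1)!$ macrostates, the distinguishing argument over permutation words being the same one used for Theorem~\ref{thm:comparisonCanoEq}. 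Note that the waiting-room transition structure of $\B'_{n}$ (stay in $q_{j}$ until letter $j$ is read) is precisely what lets a periodic continuation recover the whole order of a permutation word, which is the mechanism your permutation-group sketch still needs to supply. If you want to keep your route, the cleanest fix is to adopt this family (or the one from \cite{AngluinF16}) and prove your separating-suffix claim for it, rather than leaving the construction unspecified.
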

\begin{proof}
Each DBW $\C_{n}$ is obtained from the NBW $\B_{n}$ depicted in Fig.~\ref{fig:family-nbw-ryc-smaller} by collapsing $q_{-1}$ and $q_{0}$ together as a single nonaccepting sink state;
the resulting DBWs $\C_{n}$ are the same automata considered in~\cite[Theorem 2]{AngluinF16}.
It is immediate to see that $\C_{n}$ is a DBW.
The right congruences used in Definition~\ref{def:improved-ramsey-fdfws} are $(\canoEq^{o}, \cup_{u\in\finwords} \setnocond{\proEq^{o}_{u}})$; according to Lemma~\ref{lem:size-dbw-opt-fdfw}, we have that $\Sigma_{\oclass{u} \in \finwords/_{\canoEq^{o}}} \size{\proEq^{o}_{u}} \in \bigO(n^{2})$, i.e., the number of equivalence classes defined by all right congruences $\cup_{\finwords/_{\canoEq^{o}}} \size{\proEq^{o}_{u}}$ is in $\bigO(n^2)$;
moreover, in the proof of Lemma~\ref{lem:size-dbw-opt-fdfw}, we have showed that $\size{\canoEq^{o}} \leq n$.
It follows that the number of equivalence classes defined with $(\canoEq^{o}, \cup_{u\in\finwords} \setnocond{\proEq^{o}_{u}})$ is also in $n + \bigO(n^2) \in \bigO(n^2)$.
Since Definition~\ref{def:improved-ramsey-fdfws} gives a linear translation from $(\canoEq^{o}, \cup_{u\in\finwords} \setnocond{\proEq^{o}_{u}})$ to an FDFW $\F$, it follows that the number of macrostates in $\F$ is also in $\bigO(n^2)$.

We remark that when we use $(\icanoEq, \cup_{u\in\finwords} \setnocond{\proEq_{u}})$, we can similarly prove that the constructed FDFW $\F$ has $\bigO(n^2)$ macrostates according to the proof of Theorem~\ref{thm:size-dbw-improved-fdfw}.

According to~\cite[Theorem 2]{AngluinF16}, the minimal FDFW $\F'$ capturing each decomposition $(u, v)$ of $w \in \lang{\C_{n}}$ has at least $n! \in 2^{\bigO(n \log n)}$ macrostates regardless of the transition structure of $\C_{n}$.
Since the FDFWs obtained from ~\cite{calbrix1993ultimately, Kuperberg19dlt} capture exactly the set of decompositions $(u, v)$ of $w \in \lang{\C_{n}}$, their constructed FDFWs from $\C_{n}$ must have $2^{\Omega(n \log n)}$ macrostates.
\end{proof}

\end{document}